\newtheorem{theorem}{Theorem}
\theoremstyle{definition}
\newtheorem{defn}{Definition}
\begin{document}
%
\title{A Practical and Privacy-Preserving Framework for Real-World Large Language Model Services}
%
%
%

\author{Xueping~Liao, Yu~Mao, Wei~Liu, Anjia~Yang
\thanks{Yu~Mao and Xueping~Liao contributed equally to this work.}
\thanks{Xueping Liao, Yu~Maoand Wei Liu are with the School of Computer, Electronics and Information, Guangxi University, Nanning, Guangxi 530004, China (email: patrickymao@gmail.com; lxp2512001529@163.com; weiliuscholar@gmail.com).}
\thanks{Anjia Yang is with the College of Cyber Security, Jinan University, Guangzhou 510632, China (email: anjiayang@gmail.com).}
}

\maketitle

\begin{abstract}
Large language models (LLMs) have demonstrated exceptional text understanding and generation capabilities, and they are increasingly being used to improve productivity across various domains. However, owing to the high costs of training and maintaining these models, as well as the fact that some LLMs are proprietary, individuals frequently rely on online AI as a Service (AIaaS) provided by LLM companies. This business model poses significant privacy risks, because service providers may exploit users' trace patterns and behavioural data. In this study, we propose a practical and privacy-preserving framework that ensures user anonymity by preventing service providers from linking requests to the individuals who submit them. Our framework is based on partially blind signatures, which ensure the unlinkability of user requests. Furthermore, we introduce two strategies that are tailored to both subscription-based and API-based service models, ensuring the protection of both users' privacy and service providers' interests. The framework is designed to integrate seamlessly with existing LLM systems because it does not require modifications to the underlying architectures. Experimental results show that our framework incurs minimal computation and communication overhead, making it a feasible solution for real-world applications.
\end{abstract}


%
\IEEEpeerreviewmaketitle

\section{Introduction}

\begin{sloppypar}
\end{sloppypar}
The rapid advancements in training architectures and data have led to the remarkable performance of large language models (LLMs) in content analysis and generation. Given these capabilities, LLMs have been widely used across various domains, such as medicine \cite{singhal2022largelanguagemodelsencode}, robotics \cite{yu2023scalingrobotlearningsemantically, hu2023lookleapunveilingpower} and education \cite{zhang2024simulatingclassroomeducationllmempowered}, to improve productivity and efficiency. However, the high costs associated with training and maintaining these models (e.g., LLaMa-3 is pre-trained on 15 trillion multilingual tokens \cite{dubey2024llama3herdmodels}), combined with the fact that some LLMs are not publicly accessible, present significant challenges for individuals and small enterprises aiming to develop and sustain their own LLMs. Consequently, many companies specialising in LLMs have begun providing online LLM services, which is frequently referred to as AI as a Service (AIaaS) (e.g., OpenAI's ChatGPT \footnote{OpenAI ChatGPT: https://openai.com/chatgpt/} and Google's Gemini \footnote{Google Gemini: https://gemini.google.com/}).

Despite the benefits of AIaaS, using such services raises serious privacy concerns, particularly regarding the potential exposure of user information. Service providers may log information related to user requests, such as the identities of the requesters, time of the request, and response. This information could be used to trace queries back to specific users, compromising users' privacy. Hence, there is a pressing requirement for anonymity in LLM services, as is common in other online services. One prevalent approach to privacy protection in LLM services is to encrypt user request data, rendering the servers incapable of interpreting the content of the requests. For homomorphic encryption (HE), the research includes THE-X \cite{chen2022thexprivacypreservingtransformerinference}, Iron \cite{hao2022iron}, and BumbleBee \cite{cryptoeprint:2023/1678}. For secure multi-party computation (MPC), representative methods include East \cite{ding2023eastefficientaccuratesecure} and Privformer \cite{10190506}. In addition, Sigma \cite{gupta2023sigma} uses secret sharing (SS) to secure the inference process. However, these methods are typically resource-intensive, requiring significant computational demands and communication overhead, as well as modifications to the existing architectures to support encryption. Moreover, while these techniques prevent the service provider from learning the content of the queries, side-channel information, such as the timing of requests and responses, can still inadvertently leak user activity patterns.

To address these issues, a practical, secure framework for LLM services that guarantees user anonymity must satisfy several key properties. 
First, the framework should prevent service providers from linking requests to individual users, thereby preserving anonymity.
Second, it should be sufficient to be compatible with existing LLM architectures, allowing practical applicability and seamless integration with other system modules. 

In this study, we propose a framework that meets these requirements. Our framework considers the existing LLM service as a black box, acting as an additional layer that is fully transparent to the underlying system, requiring no changes to the existing infrastructure. In addition, the framework uses blind signatures to ensure complete anonymity of requests, preventing service providers from tracing requests back to their originators.

Specifically, the system uses partially blind signatures and tailored strategies to address the two prevalent business modes in LLM services. For the subscription mode, a partially blind signature-based scheme with a restricted subscription period enables users to make unlimited requests within the specified timeframe. In contrast, the API-based mode uses a different partially blind signature-based scheme that incorporates a request-level charging strategy, limiting users to a specific number of requests or resources (e.g., a predefined number of tokens sent in requests).

The contribution of this framework is as follows: 
\begin{itemize} 
  \item The framework provides a practical solution for maintaining user anonymity in LLM services while accommodating existing business models. 
  \item The proposed system can be seamlessly integrated into the current systems with minimal computation and communication overhead. 
  \item The framework demonstrates the application of partially blind signatures combined with tailored strategies in real-world scenarios, demonstrating its potential for extension to other online services. 
  \item Experimental results on the framework are provided, measuring the performance of partially blind signatures of the system. These results provide helpful information regarding the computation and communication costs associated with partially blind signatures.
\end{itemize}

\section{Related Work}
\subsection{Language Models} 
\label{relatedwork:LM}

Language models (LMs) are statistical models that predict the probability of subsequent or missing tokens within a given context. LLMs represent a class of pre-trained language models (PLMs) that use machine learning on an unprecedented scale \cite{zhao2023surveylargelanguagemodels}. LLMs differ from other PLMs primarily in their vast training data and model sizes, providing them with emergent capabilities—behaviours that arise as a result of scaling \cite{wei2022emergentabilitieslargelanguage}. These models have demonstrated exceptional performance in both content analysis and generation, frequently achieving human-like results across a variety of tasks \cite{openai2024gpt4technicalreport}. LLMs have been applied in numerous domains, including education, finance, and healthcare \cite{hadi2023survey}. In addition, a technique called prompt engineering has emerged to optimise LLM outputs for specific tasks \cite{lo2023clear, wang2023prompt}.

The development of LLMs involves three main stages: pre-training, fine-tuning, and inference.

\textbf{Pre-training:}
During pre-training, LLMs are trained on massive datasets to acquire general knowledge. These datasets typically consist of diverse sources, such as blogs, articles, and books. For example, the August 2024 CommonCrawl dataset comprised 2.30 billion web pages \footnote{CommonCrawl: https://commoncrawl.org/}. GPT-3 was pre-trained on various datasets, including filtered CommonCrawl data and Wikipedia, comprising approximately 300 billion tokens \cite{brown2020languagemodelsfewshotlearners}. Correspondingly, \mbox{LLaMA-3's} pre-training dataset contained 15 trillion multilingual tokens \cite{dubey2024llama3herdmodels}. The pre-training process was computationally intensive and time-consuming. For instance, training the LLaMA-3.1 70B model required an estimated 7 million hours on H100-80GB hardware \footnote{Hugging Face Meta-Llama-3.1-70B: https://huggingface.co/meta-llama/Meta-Llama-3.1-70B}.

\textbf{Fine-tuning:}
Following pre-training, LLMs are fine-tuned for specific tasks. The datasets used for fine-tuning are typically smaller and contain domain-specific knowledge. Reinforcement learning from human feedback (RLHF) is also commonly used during this stage to optimise the model's performance, as observed in GPT-4's training process \cite{achiam2023gpt}.

\textbf{Inference:}
Once pre-trained and fine-tuned, LLMs can be deployed for inference. However, deploying LLMs presents significant storage and computation challenges owing to their complex architectures and large model sizes, frequently consisting of billions of parameters. Consequently, running LLMs locally is typically impractical for individual users. To address these issues, several companies provide LLM services in the form of AIaaS, such as OpenAI's ChatGPT and Google's Gemini.

\subsection{Privacy Protection in the LLM Inference Stage}

In the current business model, LLM companies typically deploy trained models and expose the inference stage to users through online services. These services allow users to submit input data, and the system returns the inference results generated by the LLMs based on that data. However, this technique may expose users to potential privacy risks. To use LLMs for text analysis or generation, users frequently must provide information in their requests, which may include sensitive data.

To address these issues, various studies have focused on enhancing user privacy during LLM inference. HE is one method that allows computation on encrypted data. By integrating HE into LLMs, users can submit encrypted inputs for inference, and only the final results must be decrypted. Iron \cite{hao2022iron} and BumbleBee \cite{cryptoeprint:2023/1678} propose HE-based protocols to efficiently perform matrix multiplication, a common operation in transformers, enabling computation on encrypted data. Alternatively, THE-X \cite{chen2022thexprivacypreservingtransformerinference} develops encryption-friendly approximation techniques for components in transformers to reduce the computational overhead involved in processing encrypted data. Another cryptographic technique is MPC, where multiple parties collaboratively compute partial results based on the input. In this method, no single party has access to the full input or result. The final output is reconstructed by combining the partial results from each party. Privformer \cite{akimoto2023privformer} introduces a protocol that ensures privacy in a three-party system involving model owners, MPC servers, and users. This protocol guarantees that no single MPC server can recover the user's input data independently. Differential privacy (DP) is another prominent technique for protecting user privacy, ensuring that individual information cannot be inferred from the data. The perturbation mechanism proposed in a study \cite{majmudar2022differentiallyprivatedecodinglarge} applied DP to perturb LLM outputs, while DP-Forward \cite{du2023dp} introduced DP-forward, a technique that perturbed embedding matrices during the forward pass to ensure the security of both training and inference stages.

Although these methods provide varying levels of security, they frequently introduce significant computation and communication overhead, making inference impractical in real-world scenarios. For instance, Homomorphic-encryption-based schemes, such as THE-X \cite{chen2022thexprivacypreservingtransformerinference}, require extensive encryption, decryption, and interaction during each inference step. Because homomorphic encryption (HE) is inherently computation-intensive and demands costly approximations for non-polynomial functions, these approaches inevitably incur substantial latency overhead relative to plaintext inference on the same CPU hardware. According to \cite{hao2022iron}, when both methods adopt BERT-Base and run in WAN settings, Iron (CPU-based private inference) achieves a 65–-105$\times$ runtime speedup over MP-SPDZ, yet the overall latency remains considerable in practical deployments. BumbleBee \cite{cryptoeprint:2023/1678}, in contrast, reports a markedly higher efficiency: under comparable settings on BERT-Base, BumbleBee completes one inference in 2.55 minutes, whereas Iron requires approximately 34 minutes. As one of the currently most efficient private Transformer inference frameworks, BumbleBee still incurs notable overhead compared with plaintext inference—for example, in WAN environments, generating a single token from 128 input tokens on GPT-2 requires 3.06 minutes and transmits a total of 6.61 GB of data, far exceeding plaintext inference costs. Similarly, under a secure three-party protocol, generating a 64-word sentence can take up to 20 minutes \cite{akimoto2023privformer}.Although differential-privacy-based methods \cite{majmudar2022differentiallyprivatedecodinglarge,du2023dp} offer better computational efficiency, they inherently require a trade-off between privacy guarantees and model utility as model size increases.

\subsection{Blind Signature}
The blind signature scheme, first introduced in the study \cite{chaum1983blind}, is an interactive cryptographic protocol that enables a signer to authenticate a message without learning its content. This ensures the requester's anonymity while maintaining the integrity of the signed message. A variant of this scheme, known as the partially blind signature, was later proposed, allowing part of the message to be pre-agreed upon by both the requester and signer. This variation is particularly useful in practical scenarios where shared information, such as an expiration date or issuance timestamp, is necessary \cite{abe1996date}.

Blind signatures have been widely applied in systems designed to ensure user anonymity. Previous efforts \cite{chaum1990untraceable} used blind signatures to trace double-spending for untraceable digital cash while ensuring that transactions could not be linked back to the spender's identity. In addition, blind signatures have been integrated into numerous electronic voting systems to protect voters' identities and maintain the confidentiality of the voting process \cite{khater2018blind}. Similarly, the study \cite{ladd2012blind} proposed incorporating blind signatures into the Bitcoin protocol to unlink transaction recipients from their public keys, thereby enhancing privacy.

\section{Preliminaries}

\subsection{Public Key Infrastructure}
\label{preliminaries:pki}
The public key infrastructure (PKI) is a system designed to establish and maintain trustworthy mappings between the identities of principals (e.g., servers and individuals) and their associated public keys. These binding relationships are documented through digital certificates, which contain the principal's identifying information, public keys, and the key generation algorithms used \cite{rfc5280}. Digital certificates can be obtained from trusted entities within the PKI's trust model \cite{806987, weise2001public}. Each digital certificate is accompanied by a digital signature, created using the private key of a trusted entity, to authenticate the certificate's origin and ensure its integrity.

PKI is widely used in transport layer security (TLS) for server verification \cite{rfc8446}. During the TLS handshake, servers present their digital certificates along with the corresponding digital signatures issued by trusted entities. Clients then verify that the server's information matches the details provided in the digital certificates and validate the digital signatures. This process ensures that the connected servers are authenticated and trustworthy.

\subsection{Authentication Services}
Online services have become a prevalent business model, where users gain access to services provided by service providers once they are granted the necessary permissions. However, the inherent unreliability of the internet necessitates robust mechanisms for authenticating principals to ensure that commercial services are delivered by legitimate providers to genuine users. Authentication services can be broadly classified into two categories: anonymous and logged-in modes.

\textbf{Authentication in Anonymous Mode:}
In this model, services are provided at the request level. Users are issued certificates, such as tickets or tokens, which are registered on the servers. Services are then provided in response to requests that present these valid certificates. In this mode, users are not required to reveal their identities, provided that the certificates they present are valid. However, users must ensure that the servers handling their requests are legitimate to avoid the leakage of certificates or being spammed by fraudulent results. A common method is for users to establish a secure connection, such as hypertext transfer protocol secure (HTTPS) \cite{rfc2818}, where servers can prove their identity using secure protocols such as TLS under the PKI framework (see Section \ref{preliminaries:pki}).

\textbf{Authentication in Logged-in Mode:}
In contrast to the anonymous mode, services in this mode are provided at the user level. Users are required to prove their identities to the servers before gaining access to certain services. In contrast to the PKI system, where only public keys are stored, certain private user information, such as password hashes or phone numbers, is stored on the servers. During the login process, users prove their identity by demonstrating possession of this private information, which is only known to them. Once logged in, user information is recorded in the connection or session. Servers then retrieve the user's identity from these sessions to determine whether they are authorised to access specific services. Similar to the anonymous mode, users must also verify the identity of the servers to ensure security.

\subsection{LLM Service Models} 
\label{preliminaries:llm_services_models}
As discussed in Section \ref{relatedwork:LM}, deploying LLMs locally requires substantial computation and storage resources. In addition, some LLMs' parameters may not be publicly accessible. Consequently, most users rely on online LLM services. These services typically provide a range of pricing plans to satisfy different user requirements and can be broadly categorised into subscription-based and API-based modes.

\textbf{Subscription-Based Mode:}
In subscription-based mode, users pay a recurring fee, typically on a monthly or yearly basis, to access LLM resources within the specified time period.

\textbf{API-Based Mode:}
Conversely, API-based mode charges users based on the number of input tokens processed. This model allows for more precise cost management, allowing users to budget based on their actual usage.

\subsection{Partially Blind Signature}
Prior to formally introducing partially blind signatures, it is necessary to clarify some notations. $\lambda$ represents the security parameter, $S$ denotes the signer role, and $U$ signifies the user role. The notation $ y \leftarrow S(x)$ indicates that when the signer receives an input $x$, it produces an output $y$. A function $negl(x):\mathbb{N} \rightarrow \mathbb{R}_+$ is negligible in its input $x$ if $negl(x) \in x^{-\omega(1)}$.

Partially blind signatures differ from standard blind signatures in that they enable the signer to have partial knowledge or control over specific portions of the message while maintaining the confidentiality of the remaining content. In the subsequent sections, we will discuss the foundational components of partially blind signatures, which will be used in the protocols that follow.
\begin{defn} (Partially Blind Signature Scheme). A partially blind signature scheme $PBSS$ is a tuple of efficient algorithm $PBSS=(Setup,$ $Blind,Sign,Unblind,Verify)$:
\end{defn}

\begin{itemize}
    \item $(pk,sk)\leftarrow PBSS.Setup(\lambda)$: The setup algorithm takes a security parameter $\lambda$ as input and outputs a pair of public and private keys $(pk,sk)$.
    
    \item $M'\leftarrow PBSS.Blind(M,info)$: The blind algorithm is executed by the requester with the public metadata $info$ and plaintext message $M$ as input and outputs the blinded message $M'$.
    
    \item $\sigma' \leftarrow PBSS.Sign(M',info,sk)$: The sign algorithm is executed by the signer with the blinded message $M'$, public metadata $info$, and private key $sk$ as input and outputs the blinded signature $\sigma'$.
    
    \item $\sigma \leftarrow PBSS.Unblind(\sigma')$: The unblind algorithm is executed by the requester with the blinded signature $\sigma'$ as input and outputs the real signature $\sigma$ of the message $M$.
    
    \item $b\leftarrow PBSS.Verify(\sigma,M,info,pk)$: The verification algorithm considers the signature $\sigma$, plaintext message $M$, public metadata $info$, and public key $pk$ as input and outputs a bit b$\in \{0,1\}$.
\end{itemize}

Note that the public metadata $info$ mentioned above is mutually agreed upon by both the requester and signer \cite{abe1996date},
which is typically negotiated independently of the specific scenario, implying that the process by which $info$ is determined does not compromise the security of the scheme. Although the detailed $info$ calculation process is provided by the study \cite{Asghar2015ASO}, it is simplified in this paper while retaining the core principles. In essence, the $Ag$ function maps the negotiated public metadata string to $info$, which is shared between both parties.

\begin{defn} (Completeness). The $PBSS$ satisfies the completeness if, for all messages $M$ and public metadata $info$, the following holds:
\end{defn}
{\scriptsize
$$
    \Pr\left[\begin{array}{lll}
        &&(pk,sk)\leftarrow\text{PBSS.Setup}(\lambda)\\ 
        \text{PBSS.Verify}(\sigma,M,info,pk)=1&:&(M')\leftarrow\text{PBSS.Blind}(M,info)\\
        &&\sigma' \leftarrow\text{PBSS.Sign}(M',info,sk)\\
        &&\sigma \leftarrow\text{PBSS.Unblind}(\sigma)\end{array}\right]\geq 1-negl(\lambda)
$$
}

In addition to completeness, a fundamental property of most signature schemes, partially blind signatures exhibit two additional essential properties. The first is known as partial blindness (also referred to as unlinkability in some literature). This property ensures that, even though the interaction during the signing process, final signatures, and corresponding input messages are inspected, it remains impossible to trace back to the requests that generated the signatures. To formally define partial blindness, we first introduce GAME A.

\begin{defn}
(GAME A). We assume $U_0$ and $U_1$ are the honest users who follow the signature protocol, and $S_{adv}$ is the malicious signer:
\end{defn}

\begin{itemize}
    \item Step1. $(pk,sk)\leftarrow PBSS.Setup(\lambda)$.
    \item Step2. $(m_0,m_1,info_{0},info_{1},Ag)\leftarrow S_{adv}(sk)$.  
    \item Step3. Set up $U_0$ and $U_1$ as follows:
    \begin{itemize}
        \item Randomly select $b\in_{R} \{0,1\}$ and let $m_b$ and $m_{1-b}$ be the input of $U_0$ and $U_1$, respectively.
        \item Consider $(info_0,pk,Ag)$ and $(info_1,pk,Ag)$ as input for $U_0$ and $U_1$, respectively.
    \end{itemize}
    \item Step4. $U_0$ and $U_1$ interact with $S_{adv}$ to complete the signature protocol.
    \item Step5. If $U_0$ and $U_1$ output $\sigma(m_b)$ and $\sigma(m_{1-b})$, respectively, and $info_1=info_0$ holds, then arrange $\{\sigma(m_b),\sigma(m_{1-b})\}$ in the corresponding order of $(m_0,m_1)$ and send them to $S_{adv}$. Given $\perp$ to $S_{adv}$ otherwise.
    \item Step6. $S_{adv}$ output $b'\in \{0,1\}$ 
\end{itemize}
We assume that $S_{adv}$ wins the game if $b'=b$.

\begin{defn} (Unlinkability). A signature scheme is partially blind if, for all probabilistic polynomial-time algorithm $S_{adv}$ play the GAME A, sufficiently large $n$ and some constant $c$, the following holds:
\end{defn}
$$Pr\left[  b'=b \right]\leq \frac{1}{2}+negl(\lambda)$$

The second critical property is unforgeability, which ensures that only the legitimate signer can generate a valid signature. The definition of unforgeability for partially blind signatures is analogous to that outlined in the study \cite{Securityofblinddigitalsignatures}, with special emphasis on the role of public information. In the context of blind signatures, an adversary is considered successful if they can generate more than $l$ signatures (i.e., at least $l+1$ signatures) using at most $l$ signing requests. For partially blind signatures, the game follows a similar structure: if an adversary can produce $l+1$ signatures for any public information $info$ with no more than $l$ signing requests involving $info$, they are considered to have won the game. Before formally defining unforgeability for partially blind signatures, we introduce GAME B.

\begin{defn}
(GAME B). Let $U_{adv}$ be the malicious user and $S$ be the signer that follow the signature protocol.
\end{defn}

\begin{itemize}
    \item Step1. $( pk, sk) \leftarrow PBSS.Setup ( \lambda) .$
    \item Step2. $Ag\leftarrow U_{adv}( pk) .$  
    \item Step3. Let $sk$, $Ag$, and randomly considered $info$ as input of $S$.
    \item Step4. $U_{adv}$ engages in the signature protocol with $S$ in a concurrent manner. Let $l$ denote the number of times a signature is performed given the same $info$.
    \item Step5. $U_{adv}$ outputs the common information $info$ and $l$ signatures $(m_1,\sigma_1),...,(m_{l+1},\sigma_{l+1})$
\end{itemize}

\begin{defn}
(Unforgeability). A partially blind signature scheme is unforgeability if, for any probabilistic polynomial-time algorithm $U_{adv}$ play the GAME B, sufficiently large $\lambda$ and some constant $c$, the following holds for the output of $U_{adv}$:
$$Pr\left[ PBSS.Verify(pk,info,m_j,\sigma_j)=accept\right]\leq negl(\lambda),\ j=1,\cdots ,l+1$$
\end{defn}

\subsection{Threat Model}
We consider three scenarios involving three types of adversaries. First, in the case of a man-in-the-middle (MitM) attack between the user and server, the adversary can eavesdrop on and potentially modify the communication between the two parties. The objective of such an adversary is to gather maximum information regarding the content of requests and responses or to impersonate either the legitimate user or server. Second, if the adversary corrupts the user, their goal is to obtain additional services without payment. Finally, if the adversary corrupts the server, their aim is to gain access to sensitive information, such as the association between users' identities and their requests, or the relationships among the requests themselves.

In addition to the aforementioned scenarios, we also consider a common threat identified in computer systems—side-channel attacks. These attacks involve the adversary analysing subtle indirect information leaked by the system, such as timing variations or correlations between different operations. Although our system links different queries together, creating threads of related questions, the anonymity of the users is preserved since no login is required to submit queries. Users interact with the system using a blinded ‘token’ to make queries, which indicates that even if an attacker can correlate queries, they cannot definitively associate any query with a specific user. This mechanism ensures that the adversary cannot easily trace the source of the queries, even if they analyse patterns across multiple requests.

For potential IP-based tracking, users can access the service using The Onion Router(Tor), which provides network-layer untraceability. Tor uses onion routing with multi-layer encryption and distributed relays, ensuring that no single relay can learn both the source and destination of the communication. By combining Tor's network-layer anonymity with our application-layer unlinkability, complementary protection is achieved, improving overall anonymity.

\subsection{Design Goals}
Our objective is to design a privacy-preserving framework for LLMs that safeguards user privacy while ensuring the delivery of accurate responses. This framework is designed to provide secure communication, anonymity, and protection of users' and service providers' interests.

\textbf{Secure Communication:}
This property ensures that the messages exchanged between the user and server are protected by confidentiality, integrity, and authentication. Confidentiality guarantees that the content of the messages remains private and inaccessible to unauthorised parties. Integrity ensures that the messages are not altered or tampered with during transmission. Authentication verifies that the messages are sent and received by the intended parties, thereby preventing impersonation attacks.

\textbf{Anonymity:}
We provide a service mode, which is called \textit{private mode}, where the servers simply handle the requests from the users without considering any users' identity information. In general, anonymity implies that the server cannot identify who it is interacting with in \textit{private mode} beyond the information derived from the requests. To formalise this property, consider two users with identities $ID_0$ and $ID_1$, who interact with the server using the same requests. In \textit{private mode}, the server's view includes $\{ID_0,ID_1,inter\_info,lience(ID_0), Q, A\}$ and $\{ID_0, ID_1, inter\_info,  lience\\(ID_1), Q, A\}$, where the $lience$ represents information regarding the authenticated user, and $inter\_ info$ represents the information generated during the interaction between the server and users $ID_0$ and $ID_1$, including $m'_0$, $m'_1$,$\sigma'_0$,$\sigma'_1$, and so on generated during the blind signature process. Here, $Q$ and $A$ represent the request and response, respectively. For simplicity, we refer to these views as $view_0$ and $view_1$. Now, a challenger $C$ selects a random bit $b \in \{0,1\}$. The adversary, given $\{view_0, view_1\}$, outputs a bit $b'$. The anonymity property holds if the probability $|Pr[b' = b] - 1/2|$ is less than $1/\textit{poly}$. It is important to note that anonymity implies unlinkability, implying the server cannot determine whether two queries in \textit{private mode} originate from the same user beyond the information obtained from the queries themselves.

\textbf{Interest Protection:}
This framework is designed to protect the interests of both users and servers. Upon payment or subscription, users gain access to the services they are entitled to. In addition, users are limited to using only the services they have been granted, preventing overuse of the server's resources.

\subsection{Challenges}
To make the framework practical, several issues must be addressed. First, this framework is designed to support real-world business models, requiring compatibility with common LLM service models, as discussed in Section \ref{preliminaries:llm_services_models}. In a subscription-based mode, the framework must ensure user anonymity while also restricting access to the service for the duration of the user's subscription period. Conversely, in an API-based mode, the framework must ensure that service providers can enforce limits on the number of input tokens submitted by anonymous users, in accordance with the purchased quota. Second, the framework should be compatible with other modules within the system, making it adaptable to both existing and future business logic. In addition, because the services are primarily online, users wait for responses after submitting requests. The framework should introduce reasonable overhead to maintain security while minimising any degradation to the user experience.

\section{Proposed Framework}

In this section, we present a framework that uses partially blind signatures to ensure user anonymity within LLM services. The framework incorporates minor adjustments to the charging models and uses the partially blind signature to achieve anonymity across two prevalent business service models while ensuring compatibility with existing systems. In addition, we note that PKI is used as the default mechanism to establish basic secure communications between clients and service providers.

To achieve the dual objectives of anonymity and compatibility, our framework ensures that requests remain unlinkable to their senders while minimising modifications to the underlying LLM systems. Blind signatures were selected as the foundation of our system because they provide strong anonymity guarantees, ensuring that service providers are unable to discern the content they sign. Furthermore, because the signature serves as auxiliary data for validating requests at the outermost layer of the service, no modifications to the core LLM architectures were required, thereby ensuring compatibility.

In contemporary business models, it is common for users and service providers to negotiate certain information, such as service agreements, quality of service, and membership details. However, standard blind signatures allow users to control all the information, rendering service providers unaware of the content they sign. To address this, we opted for partially blind signatures, which allow both users and service providers (acting as signers) to jointly control certain parts of the data to be signed, effectively addressing the business requirements mentioned earlier.

To integrate partially blind signatures into practical business scenarios, we devised strategies tailored to different service models. In subscription-based models, users and service providers agree on a subscription period, which is included in the negotiated public data of users and service providers. A unique subscription period could potentially distinguish users, thereby compromising anonymity. To address this issue, our strategy records only the subscription deadline, which is standardised to specific dates (e.g., the first day of each month for monthly subscriptions). By having a large number of users share the same subscription deadlines, it becomes challenging for service providers to trace requests back to individual users. Furthermore, in this model, service providers are not required to generate multiple blind signatures at once. Instead, they issue a new blind signature after each completed query, ensuring that each legitimate user maintains only one blind signature during the valid subscription period. Service providers do not need to track which signatures have been used, unless they wish to implement a feature that prevents multiple users from using the same account simultaneously. In such cases, tracking used signatures would be necessary. These strategies significantly reduce the problem of storing a large number of blind signatures. In API-based pricing models, users and service providers must be aware of the number of tokens purchased. Directly recording token amounts could expose user identities. To address this issue, our framework charges users based on the number of requests they intend to send, with a fixed maximum number of tokens per request. This method obscures individual identities by having many users adhere to the same charging model. In addition, both users and service providers must be aware of the number of tokens purchased and the type of API being accessed, such as GPT-4 or Code Interpreter. Directly recording token amounts and API types could also expose user identities. To address this issue, our framework uses partially blind signatures, where certain information, such as the API type, is included in the public negotiated data. This method maintains user anonymity while allowing necessary information to be recorded.

The design details of our framework are as follows:

\subsection{Compatible Service Layer}
Our framework modifies only the inference stage of the LLMs' processes. Specifically, it functions as an additional layer of the LLM services, without requiring changes to other components of the models. Before sending their requests, users must submit valid tickets and corresponding unblinded signatures along with their requests to the service providers. Upon receiving this information, the service providers first verify the legitimacy of the tickets with the signatures. If the tickets are valid, the requests proceed to the underlying modules for standard processing. Thus, our framework only introduces an extra layer to the LLM services before the normal request processing, which remains decoupled from other parts of the system, ensuring compatibility with existing systems.

\subsection{Service Models under Blind Signatures and Strategies}
Our system uses partially blind signatures to guarantee user anonymity. Users generate several tickets, which are unique within the given space. Before initiating service, the users and service providers agree on certain information. The users then blind the tickets to ensure they remain unknown to the service providers. These blinded tickets are sent to the service providers to generate blind signatures. Users subsequently unblind these signatures and submit the tickets, unblinded signatures, and request data to the service providers. Only after these signatures are validated can the requests be processed further. To accommodate practical business models, our framework provides solutions for both API- and subscription-based modes, as shown below:

\textbf{Service in API-based Mode:}
Users select pricing plans based on the number of requests they intend to send. The charging model is slightly altered in our framework; instead of charging based on tokens, users purchase a fixed number of requests, with the maximum number of tokens per request determined and fixed in the pricing plan. After selecting a pricing plan, users register with the verified service providers under a PKI system and pay for the number of requests (denoted as $N$) they intend to use. Users then generate $N$ local tickets $T=\{t_0, t_1, ..., t_N\}$. In addition, the users and service providers agree on the format for the pricing plan information $info$, which includes necessary details such as the type of LLM and maximum number of tokens per request. This information is hashed using the agreed-upon function $Ag$, producing $info'=Ag(info)$. The users then blind the tickets as $T'=blind(T, info')$ and send them to the service providers. The service providers generate the signatures $\sigma=sign(T', info', sk)$, where $sk$ denotes the service providers' secret key, and return them to the users. Note that the service provider maintains a used signature list $\sigma_{used}$ for all the users. The users then unblind the signatures to obtain $\sigma'=unblind(\sigma)=\{\sigma_1', \sigma_2', ..., \sigma_n'\}$, with each signature corresponding to a ticket in $T$. At this point, the users can log out, clear their login information, and enter anonymous mode. When sending a new request $m$, the user includes one unused signature $\sigma_i \in \sigma$ and $info$ to the service provider. Subsequently, the service provider checks whether $\sigma_i$ is in $\sigma_{used}$ and verifies $\sigma_i$ using the service provider’s public key $pk$. If $\sigma_i$ is valid and not in $\sigma_{used}$, the request is processed further and $\sigma_i$ is added to $\sigma_{used}$. The steps described are shown in Fig.~\ref{Figure:api mode interaction diagram}.
\renewcommand{\figurename}{Fig.}
\begin{figure}[!t]
	\centering
	\includegraphics [width=1\linewidth ]{./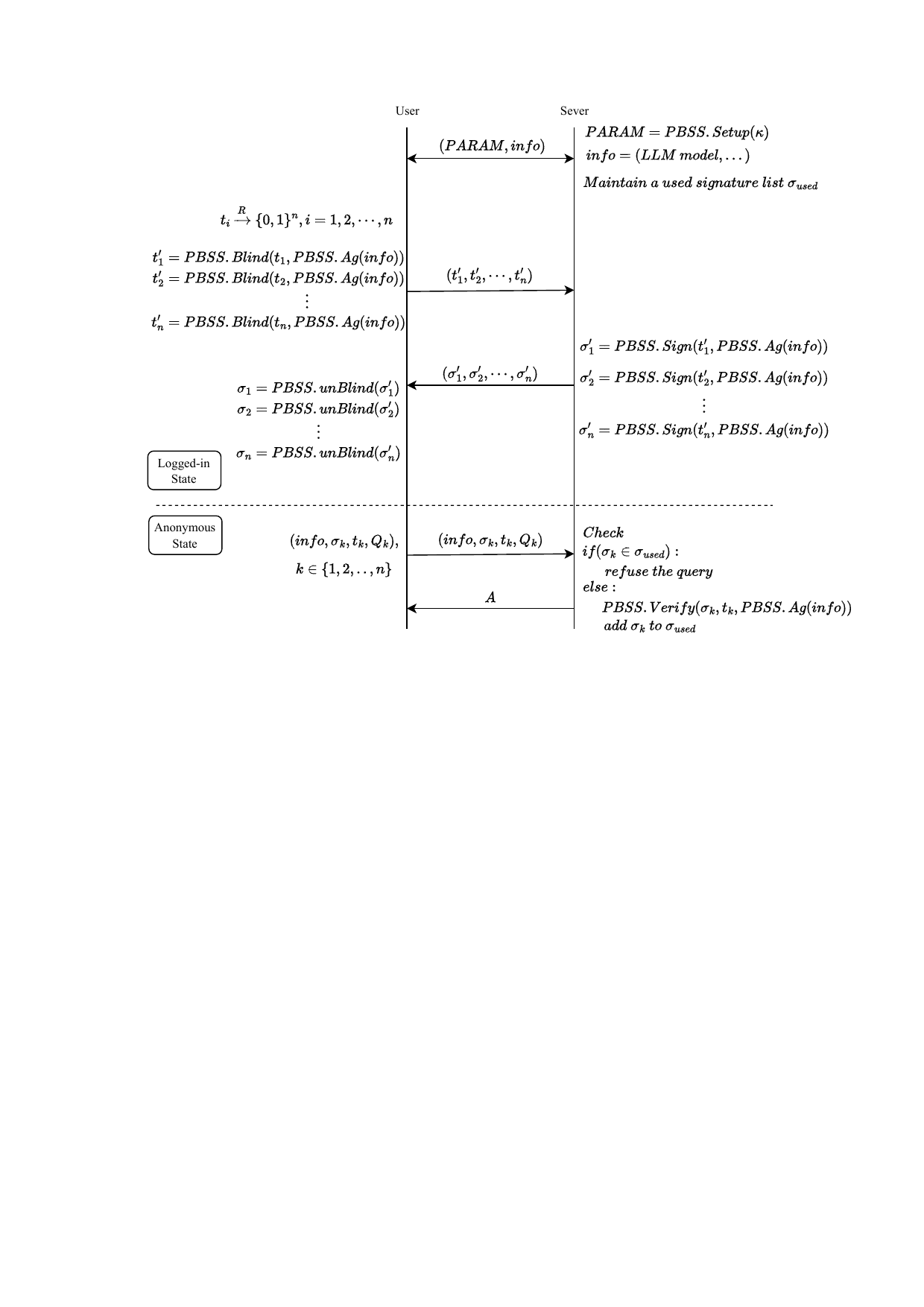}
	\caption{API-based mode interaction.}
	\label{Figure:api mode interaction diagram}
\end{figure}

\textbf{Service in Subscription-based Mode:}
In this mode, users first select a subscription plan. The charging model is also slightly modified in our framework; the service providers fix the subscription period for users, that is, the end date of each subscription period is predetermined (e.g., in a monthly subscription, the deadline is fixed on the 1st of the following month). Users register and log in to the verified service providers' system (under PKI), then choose and pay for the subscription plan. In addition, the users and service providers agree on the format for the subscription plan information $info$, which includes necessary details such as the type of LLM and subscription period deadline. This information is hashed using the agreed-upon function $Ag$, producing $info'=Ag(info)$. The users then generate a local ticket $t_0$, blind it as $t_0'=blind(t_0, info')$, and send it to the service providers. The service providers generate the signature $\sigma_0=sign(t_0', info', sk)$, where $sk$ denotes the service providers' secret key, and return it to the users. The users then unblind the signature to obtain $\sigma_0'=unblind(\sigma_0)$. At this point, the users can log out, clear their login information, and enter anonymous mode. Before sending requests, users generate a new local ticket $t_1$, blind it to $t_1'=blind(t_1, info')$, and send the request $m$, $t_0$, $\sigma_0'$ and $t_1'$ to the service provider. The service provider first checks whether $info$ is within the subscription period's deadline and then validates the signature $\sigma_0'$ using the service provider's public key $pk$. If valid, the request is processed further. Before sending the result back to the user, the service provider generates a new signature $\sigma_1=sign(t_1', info', sk)$ and returns it along with the result. The users can then use $\sigma_1$ to repeat the process for subsequent requests, provided they are submitted before the agreed-upon deadline. The steps described are shown in Fig.~\ref{Figure:sub mode interaction diagram}.
\begin{figure}[!t]
	\centering
	\includegraphics [width=1\linewidth ]{./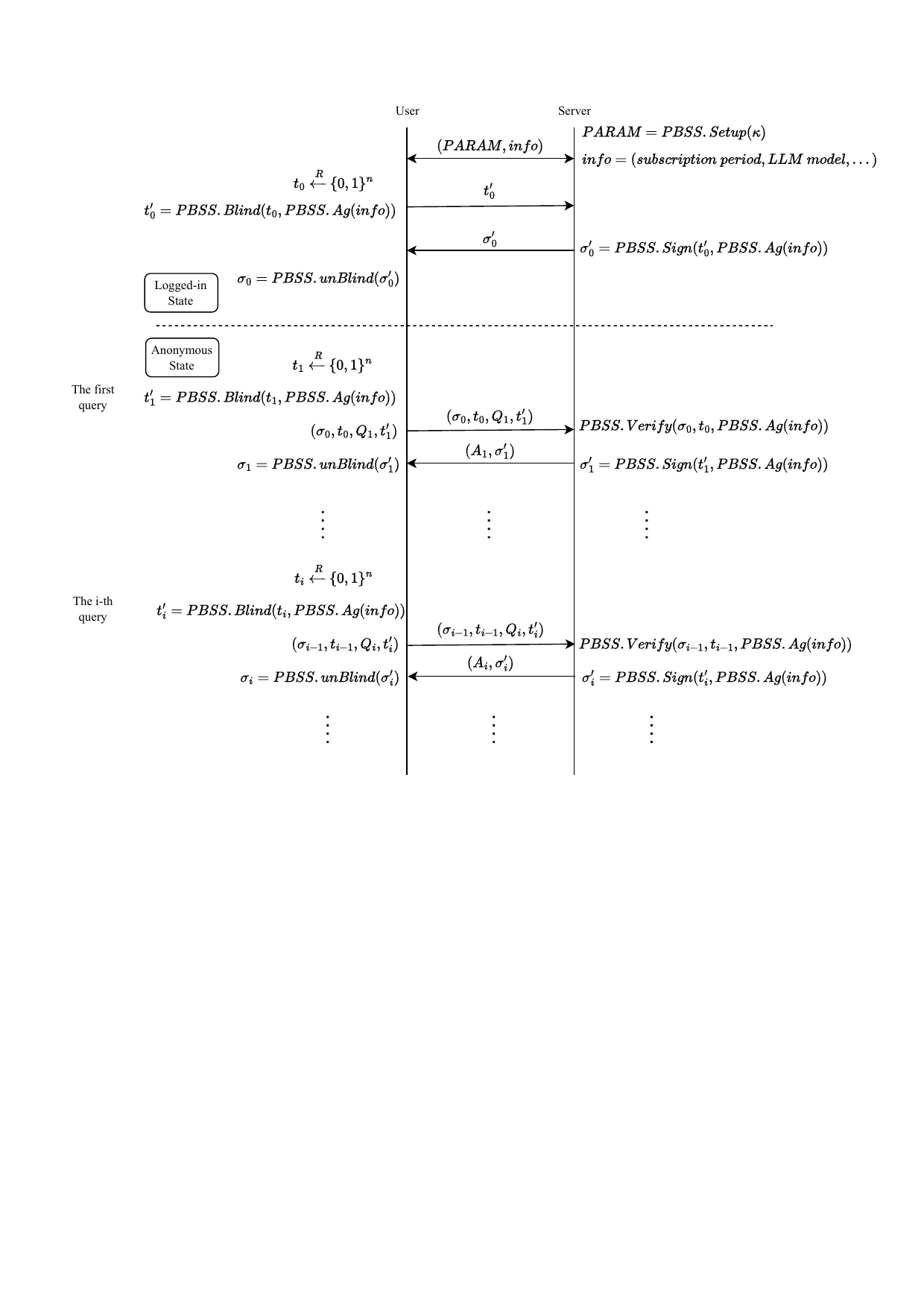}
	\caption{Subscription-based mode interaction.}
	\label{Figure:sub mode interaction diagram}
\end{figure}

\textbf{Remark:} In the subscription-based mode, the service providers do not need to maintain a list of used signatures. The information disclosed in the newly generated partially blind signatures for each round is consistent with the previous signatures; therefore, users cannot gain additional services. However, if the service providers record every used signature, they can provide an additional feature to prevent multiple users from using the same account simultaneously.

\subsection{Security Analysis}

\begin{theorem}
    The proposed framework provides secure communication.
\end{theorem}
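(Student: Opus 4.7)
The plan is to reduce the three sub-properties that constitute secure communication (confidentiality, integrity, authentication, as spelled out in the Design Goals section) to the standard security guarantees of the TLS/PKI layer that the framework explicitly adopts as its default transport. Since the framework itself introduces no new channel-level cryptography and instead layers partially blind signatures on top of an already established PKI-authenticated session, the natural approach is a reduction: assume a probabilistic polynomial-time adversary $A$ that breaks one of the three sub-properties against the framework, and construct from $A$ an adversary $A'$ that breaks the analogous property of the underlying TLS/PKI primitive.

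First I would unpack the threat model: the MitM adversary can observe and modify wire traffic between the user and the server, while the corrupted-user and corrupted-server cases are not about channel security and will be deferred to later theorems. For the MitM case I would split the argument into three steps corresponding to the three requirements. (i) \textbf{Authentication}: invoke the server-authentication property of the TLS handshake under PKI (Section~\ref{preliminaries:pki}) to show that any session a user completes is with a server holding a valid certificate signed by a trusted entity; impersonation of the server therefore reduces to forging a digital certificate signature. (ii) \textbf{Confidentiality}: once the authenticated handshake yields session keys, the AEAD channel provided by TLS guarantees that the request, the ticket $t_i$, the blinded ticket $t_i'$, the signature $\sigma_i'$ and the returned answer are indistinguishable from random to any network observer; any distinguisher against the framework's confidentiality yields a distinguisher against the TLS record layer. (iii) \textbf{Integrity}: TLS record-layer MACs ensure that any modification to ciphertext is detected, so an integrity break against our framework reduces to a ciphertext-integrity break of TLS.

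Next I would verify that the additional layer we introduce does not leak anything that could degrade these inherited guarantees. In particular, the ticket/signature material $\{t_i, t_i', \sigma_i, \sigma_i'\}$ is transmitted only inside the TLS tunnel in the clear-channel sense, and the negotiated $info$ is generated by the agreed-upon function $Ag$ before blinding; thus no plaintext artifact of the partially blind signature protocol is ever exposed on the wire outside the encrypted TLS record, and no auxiliary handshake messages are added that could be forged or tampered with independently of TLS.

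The main obstacle I anticipate is not any deep cryptographic step but rather being precise about the composition: arguing that wrapping the partially blind signature protocol inside TLS preserves the three channel-level properties requires a clean statement of what is meant by ``secure communication'' in a composable sense, and handling the subtlety that the server-side verification of $\sigma_i'$ happens \emph{after} the TLS channel has already been established, so the inherited authentication is of the \emph{endpoint}, not of the requester's identity (which is precisely what the anonymity theorem is supposed to hide). I would therefore close the proof by explicitly noting that authentication here refers to server-authentication plus message-origin-authentication within the TLS session, and that requester-authentication is intentionally out of scope for this theorem and is instead addressed by the interest-protection argument that follows.
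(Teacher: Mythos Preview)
Your proposal is correct and takes essentially the same approach as the paper: both delegate confidentiality, integrity, and server authentication to the PKI/TLS layer that the framework assumes as its default transport. The paper's own proof is a two-sentence appeal to PKI and additionally remarks that in the logged-in mode the service provider also authenticates the user's identity, a point you explicitly scope out and defer to the interest-protection theorem.
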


\begin{proof}
    Under the PKI architecture, confidentiality, integrity, and server-side authentication can be ensured for communications between users and the service providers. In the login mode, the service providers can also ensure the users' identities to provide services and sign the signatures.
\end{proof}

\begin{theorem}
    The proposed framework provides anonymity.
\end{theorem}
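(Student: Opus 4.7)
The plan is to prove anonymity by reducing it to the unlinkability (partial blindness) property of the underlying partially blind signature scheme, as captured by GAME A. Suppose toward contradiction that some probabilistic polynomial-time server-side adversary can distinguish $view_0$ from $view_1$ with non-negligible advantage; I would build a distinguisher $S_{adv}$ that uses this framework adversary as a subroutine to win GAME A.

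First I would audit the contents of the server's view and show that, apart from the partially blind signature transcripts, nothing discriminates $ID_0$ from $ID_1$. By design, the public metadata $info$ is standardized across a large user population, namely a fixed pricing-plan descriptor in API mode and a uniform deadline in subscription mode, so $info$ is identical across the two challenge views and contributes no advantage. The identifiers $ID_0, ID_1$ and the $lience$ fields live in the authenticated (non-private) phase and, once the user switches to private mode, are detached from subsequent tickets by the blinding step. The query/answer pair $(Q, A)$ is, by the anonymity definition, identical in both challenge views. Hence the only component of $inter\_info$ that could plausibly distinguish the two views is the tuple of blind-signing messages $(t'_i, \sigma_i, \sigma'_i)$ produced in each round.

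Next I would construct the reduction itself. The simulator instantiates two honest users $U_0, U_1$ in GAME A with the same $info$ (explicitly permitted by the definition) and with messages equal to the tickets the framework adversary expects to see processed. It forwards the challenger's signer outputs through the unblinding step and packages them into a simulated server view whose distribution matches the real $view_b$. Any non-negligible advantage of the framework adversary in guessing $b$ therefore translates directly into the same advantage for $S_{adv}$ in GAME A, contradicting the partial blindness assumed of $PBSS$.

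The main obstacle will be the subscription mode, in which each request simultaneously consumes a prior signature $\sigma'_0$ and issues a fresh blinded ticket $t'_1$, chaining many PBSS rounds together. A single invocation of GAME A will not suffice; I expect to use a standard hybrid argument, replacing one chain position at a time and accumulating at most $\mathrm{poly}(\lambda)$ negligible advantages into an overall negligible bound. A related subtlety is the optional $\sigma_{used}$ list employed to block concurrent sessions: I would argue it is a deterministic function of already-simulated PBSS transcripts and therefore introduces no additional leakage, so the hybrid reduction carries through without modification.
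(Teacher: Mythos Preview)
Your proposal is correct and follows essentially the same strategy as the paper: reduce the anonymity game to the unlinkability (GAME~A) of the underlying partially blind signature scheme by simulating the server's view from PBSS transcripts and translating any distinguishing advantage into an advantage against partial blindness. Your treatment is in fact more careful than the paper's, which presents only a single-round reduction and omits both the hybrid argument over the chained signatures in subscription mode and the discussion of the $\sigma_{used}$ list that you explicitly address.
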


\begin{proof}
    Assume there exists an adversary capable of breaking anonymity, specifically one that can corrupt the server and associate a query with a user's identity across multiple queries. In our scheme, which uses a partially blind signature mechanism, the server only has access to the blinded identity information. From the server's perspective, each query appears as a random sequence of numbers. However, if an adversary can infer user information from these seemingly random numbers, they would be able to link specific signatures to corresponding users. This would compromise the unlinkability property of the underlying partially blind signature scheme.
\end{proof}

\begin{figure}[!t]
	\centering
	\includegraphics [width=1\linewidth ]{./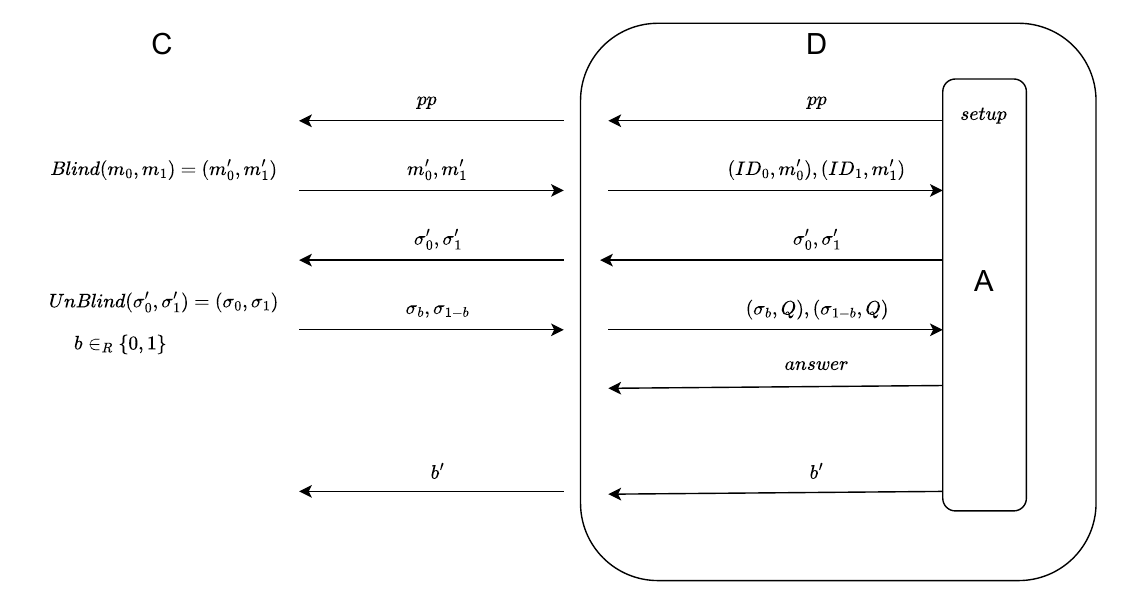}
	\caption{Security reduction of our construction from a partially blind signature.}
	\label{Figure:reduction interaction diagram}
\end{figure}

\textbf{Reduction:}
Assume there exists an adversary $A$, which is a probabilistic polynomial-time algorithm that violates anonymity. We then construct an algorithm $D$ that uses $A$ as a black-box oracle to break the unlinkability assumption of the partially blind signature scheme. Specifically, the input to $D$ is $(m'_0, m'_1, \sigma_b, \sigma_{1-b})$, and $D$ attempts to guess the bit $b'$. The reduction algorithm proceeds as shown in Fig.~\ref{Figure:reduction interaction diagram}. 
First, the challenger sends $m'_0$ and $m'_1$, which are the blinded versions of $m_0$ and $m_1$, respectively, to $D$. $D$ appends the user IDs to both $m'_0$ and $m'_1$ and forwards them to the adversary. The adversary computes the blinded signatures $\sigma'_0$ and $\sigma'_1$ and returns them to the challenger. The challenger then unblinds these signatures to obtain $\sigma_0$ and $\sigma_1$, respectively.
Subsequently, the challenger randomly selects a bit $b$, sending $\sigma_{b}$ and $\sigma_{1-b}$ to $D$. $D$ randomly selects a query and sends it to the adversary $A$, who responds with an answer and a guess for $b'$.
From the adversary's perspective, they observe two views, $view_0$ and $view_1$, and assign $\sigma_b$ and $\sigma_{1-b}$, respectively, based on these views. If the adversary can correctly determine $b = b'$ with high probability, $D$ can leverage this determination to infer the value of $b$ chosen by the challenger. This allows $D$ to solve the signature attribution problem and ultimately break the unlinkability assumption.

\textbf{Success Probability:}
The challenger's success probability is contingent on the output $b'_D$ produced by $D$, which, in turn, is influenced by the output $b'_A$ from the adversary. Therefore, the overall success probability of the challenger is directly dependent on the success probability of the adversary.

$$
\begin{aligned}
    Adv_{Unlinkability}(\lambda)&=\left|Pr\left[ b'_D=0 \& b=0 \right]-Pr\left[ b'_D=1 \& b=1 \right]\right| \\
    &=\left|Pr\left[ b'_A=0 \& b=0 \right]-Pr\left[ b'_A=1 \& b=1 \right]\right| \\
    &=\left|Pr\left[b'_A=b]-1/2\right]\right| \\
    &=Adv_{Anonymity}(\lambda)
\end{aligned}
$$
so $Adv_{Unlinkability}(\lambda)=Adv_{Anonymity}(\lambda)\leq negl(\lambda)$

\begin{theorem}
    The proposed framework provides interest protection.
\end{theorem}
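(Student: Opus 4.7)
The plan is to split interest protection into two directions and two service modes. The \emph{user side} obligation, that a paying user can actually obtain the service they bought, follows essentially from completeness of the partially blind signature scheme together with PKI-based server authentication: a legitimate server correctly signs blinded tickets during registration, so by the completeness definition the unblinded signatures verify against $pk$ with overwhelming probability, and hence every valid request is accepted. This part is short and I would dispatch it first.

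The \emph{server side} obligation, that a user cannot consume more service than they have paid for, is the substantive claim and I would prove it by reduction to the unforgeability game (GAME B) for $PBSS$. For the API-based mode, I would assume an adversary $A_{user}$ that, after paying for $N$ requests (hence engaging in at most $N$ signing interactions with the server on common information $info$), successfully pushes $N+1$ accepted requests through the service. Each accepted request carries a ticket/signature pair $(t_i,\sigma_i)$ such that $PBSS.Verify(\sigma_i,t_i,info,pk)=1$, and the server's used-signature list $\sigma_{used}$ ensures these $N+1$ pairs are pairwise distinct in the ticket component. I would then construct $U_{adv}$ for GAME B that simply proxies the $N$ blind-signing interactions between $A_{user}$ and the external signer $S$, and outputs the $N+1$ distinct verifying pairs that $A_{user}$ produced. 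By definition $U_{adv}$ wins GAME B, contradicting unforgeability.

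For the subscription-based mode the reduction is slightly different because the charging unit is time rather than request count. Here I would fix the public information $info$ to contain the subscription deadline and show that any adversary obtaining service after the deadline, or running two concurrent sessions on the same account in a way that yields an extra valid $(t,\sigma')$ pair with a fresh $info$, must either (i) forge a signature on a new $info$ whose deadline has not been legitimately purchased, reducing to unforgeability with that $info$ as the common information in GAME B, or (ii) reuse an already-spent ticket, which is rejected by the server's optional $\sigma_{used}$ check for the concurrent-session feature. The chaining structure, where each served request returns exactly one new signature $\sigma_1$ on the user-supplied blinded ticket $t_1'$, is the key invariant: it guarantees that at any moment the user holds at most one unspent signature per account per $info$, so no amplification in the number of valid tickets is possible without a forgery.

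The main obstacle, I expect, will be the subscription-mode argument rather than the API-mode one. In API mode the mapping from ``extra service consumed'' to ``extra signature forged'' is immediate, whereas in subscription mode one must carefully argue that the chaining protocol never lets the adversary accumulate two valid outstanding signatures under the same $info$ from a single signing interaction. I would handle this by a simple invariant-style induction over the number of served requests, using the fact that the server signs $t_1'$ only after verifying $\sigma_0'$ and only returns a single new signature per served query; any deviation from this invariant by the adversary translates, as above, into a GAME B winning strategy and therefore occurs only with negligible probability.
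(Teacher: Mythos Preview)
Your proposal is correct and rests on the same core idea as the paper---server-side interest protection reduces to unforgeability of the partially blind signature---but you formalize it far more carefully than the paper does. The paper's own proof is a short sketch: it simply observes that an impersonating or over-consuming adversary must forge a fresh signature (API mode) or tamper with the public expiration field in $info$ (subscription mode), both contradicting unforgeability; for the user side it invokes non-repudiation rather than your completeness-plus-PKI argument, and it closes by noting that the public $info$ pins down service type, token cap, and deadline. Your explicit GAME~B reduction for the API mode matches this exactly, just spelled out. Your subscription-mode chaining invariant (``at most one outstanding signature per $info$'') is sound but stronger than the paper needs: the paper's Remark points out that even if a user replays and accumulates several signatures, they all carry the same deadline in $info$ and thus confer no extra service, so the $\sigma_{used}$ list is genuinely optional there; the only essential step is your reduction for a forged future-dated $info$, which is precisely the paper's ``tamper with the expiration date'' clause.
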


\begin{proof}
    First, in the anonymous mode, if an adversary disguises themselves as a legitimate user to gain access to the server's services, they would be required to forge a legitimate partial blind signature. Similarly, if users corrupted by the adversary wish to obtain additional services, they have to tamper with the public expiration date of the signature in the subscription-based mode or generate new signatures in the API-based mode. Both scenarios would compromise the unforgeability of partially blind signatures. Second, the non-repudiation of signatures guarantees the users’ rights after purchasing services. Third, the feature of partially blind signatures that include public information clearly specify the types of services, maximum number of tokens per request, and subscription deadline of services. Therefore, the proposed framework protects the interests of all parties involved.
\end{proof}

\section{Experiment and Performance Analysis}
\subsection{Experimental Setup}

\textbf{Implementation:}
Our implementation consisted of two components: the server and client, designed to closely mimic a real-world scenario. We used \textit{mkcert} to generate certificates for CA and server and install the root certificate on the client. The server provided HTTPS APIs to handle incoming requests. Because no modifications were required to the LLMs within our system, we used pre-trained DeepSeek-R1-Distill-Qwen-14B and DeepSeek-R1-Distill-Llama-8B models from Hugging Face to deliver the underlying inference capability on the server side. The client component sends requests to the server's API after verifying the server's identity using the root certificate. In our proposed solution, we used a standardised RSA-based partially blind signature scheme as the foundational cryptographic component; detailed specifications can be found in the study \cite{amjad-cfrg-partially-blind-rsa-03}. The modulus $n$ was configured to be 2048 bits. The two hash functions used, designated as $H_M$ and $H_{MD}$, were derived from modifications of the SHA-384 algorithm.

We conducted four sets of experiments. The first three were comparative experiments: one control group using the original LLM without partially blind signatures and two experimental groups incorporating partially blind signatures. The experimental groups were further divided into the subscription and API mode groups. In the control group, we asked five different questions(Q1-Q5)\footnote{Q1: ‘What is the tallest mountain in the world?’; Q2: ‘What does the Planck constant represent in physics?’; Q3: ‘Which planet is known as the “Red Planet”?’; Q4: ‘What is the chemical symbol for gold?’; Q5: ‘How many bones are there in an adult human body?’}, and repeated these five questions in sequence to enhance the reliability of the experiment, measuring the processing time from request to response, as well as the size of the request and response data. The same setup was applied to the experimental groups. The data from both experimental groups were compared with the control group, and the detailed results are listed in TABLE \ref{tab:quantitative_results}. In the final experiment, we focused on the API-based mode, where we generated 10 signatures in a single request. We measured this duration of the process, along with the sizes of both the request and response data. 

In addition, considering that real-world LLM services typically operate in multi-user environments and that models deployed by different organisations may adopt distinct internal mechanisms, we further conducted experiments to better reflect practical settings. Specifically, we introduced concurrent-user experiments with varying numbers of users, as well as cross-provider comparison experiments. These evaluations aim to verify that our framework introduces no additional overhead even under multi-user concurrency and across heterogeneous model providers.We designed the following experiments. Experiment 1 was conducted on the deepseek-ai/DeepSeek-R1-Distill-Qwen-14B model. Under three modes (api, sub, and normal)we measured the performance differences when 5 and 10 users concurrently queried the system. In each setting, the users repeatedly issued five long questions (LQ1-–LQ5) in a loop for five rounds. Experiment 2 followed the same experimental design but was conducted on the Qwen/Qwen2.5-14B model. These experiments collectively demonstrate that the proposed framework maintains negligible overhead under concurrent usage and remains robust across models from different institutions.

Availability. We provide a reproducible implementation in \href{https://gitee.com/lpacino/llm_-secure.git}{Gitee}

\textbf{Hardware:}

The experimental setup included a server equipped with an Intel Xeon (Icelake) Platinum processor operating at 2.9 GHz, with maximum turbo frequency reaching 3.5 GHz and supported by 128 GB of RAM. The GPU is NVIDIA A10 with 24 GB of video memory $\times 2$. The GPU and CPU support a PCIe 4.0 connection. The client system used an Intel® Core™ i7-9750H processor featuring six cores running at 2.6 GHz, complemented by 8 GB of RAM. The server and client components were deployed on a commercial cloud instance and a standard residential machine, respectively. All experiments were performed over a WAN to accurately replicate real-world deployment scenarios.
\subsection{Experimental Results and Performance Analysis}

\textbf{Quantitative Results:}

To evaluate the performance of our proposed scheme, we conducted a series of experiments measuring communication time, request size, and response size. We compared the baseline system without blind signatures (normal) against our two proposed modes: the subscription-based blind signature mode (subscription) and the API-based blind signature mode (API).The following analysis is based on the results of the DeepSeek-R1-Distill-Qwen-14B model. The results of the DeepSeek-R1-Distill-Llama-8B model are provided in the appendix.

The aggregated results, including the mean ($\mu$) and standard deviation ($\sigma$) for each metric across five different questions (Q1-Q5), are listed in Table \ref{tab:quantitative_results}. For a more intuitive visualisation of the data distribution and variance, Figs.~\ref{Figure:Q1 - Q5 Communication Time Comparison(14B)},\ref{Figure:Q1 - Q5 Request Size Comparison(14B)}, and \ref{Figure:Q1 - Q5 Response Size Comparison(14B)} show the performance of the three modes using box plots for each metric.

As can be observed from the data, the introduction of our partial blind signature mechanism led to a noticeable increase in data payload, particularly for the request size in both the Subscription and API modes and for the response size in the Subscription mode. In contrast, the overall impact on end-to-end communication time appeared to be minimal, with no significant order-of-magnitude changes observed across all tested questions when compared to the baseline.

\begin{table*}[ht]
\centering
\caption{Mean ($\mu$) and standard deviation ($\sigma$) of performance metrics across all modes.}
\label{tab:quantitative_results}
\resizebox{\textwidth}{!}{%
\begin{tabular}{c|l|cc|cc|cc}
\hline
\multirow{2}{*}{\textbf{Question}} & \multirow{2}{*}{\textbf{Metric}} & \multicolumn{2}{c|}{\textbf{Normal Mode}} & \multicolumn{2}{c|}{\textbf{Subscription Mode}} & \multicolumn{2}{c}{\textbf{API Mode}} \\
\cline{3-8}
& & \textbf{$\mu$} & \textbf{$\sigma$} & \textbf{$\mu$} & \textbf{$\sigma$} & \textbf{$\mu$} & \textbf{$\sigma$} \\
\hline
\multirow{3}{*}{\textbf{Q1}}
& Time (s)           & 37.7 & 0.3 & 37.3 & 0.7 & 36.5 & 0.7 \\
& Request Size (B)   & 652.0 & 0.0 & 1767.8 & 13.0 & 1321.4 & 0.5 \\
& Response Size (B)  & 169.0 & 0.0 & 530.0 & 0.0 & 169.0 & 0.0 \\
\hline
\multirow{3}{*}{\textbf{Q2}}
& Time (s)           & 321.2 & 4.4 & 323.1 & 3.8 & 319.6 & 2.2 \\
& Request Size (B)   & 661.0 & 0.0 & 1781.2 & 1.5 & 1330.4 & 0.5 \\
& Response Size (B)  & 1273.0 & 0.0 & 1634.9 & 2.7 & 1273.0 & 0.0 \\
\hline
\multirow{3}{*}{\textbf{Q3}}
& Time (s)           & 321.7 & 3.3 & 322.0 & 2.5 & 320.9 & 6.3 \\
& Request Size (B)   & 652.0 & 0.0 & 1772.7 & 0.5 & 1321.4 & 0.5 \\
& Response Size (B)  & 1197.0 & 0.0 & 1558.0 & 0.0 & 1197.0 & 0.0 \\
\hline
\multirow{3}{*}{\textbf{Q4}}
& Time (s)           & 290.1 & 3.4 & 292.6 & 4.9 & 299.3 & 5.2 \\
& Request Size (B)   & 647.0 & 0.0 & 1767.6 & 0.5 & 1316.4 & 0.5 \\
& Response Size (B)  & 1078.0 & 0.0 & 1439.0 & 0.0 & 1078.0 & 0.0 \\
\hline
\multirow{3}{*}{\textbf{Q5}}
& Time (s)           & 150.9 & 2.8 & 152.0 & 1.9 & 153.3 & 2.2 \\
& Request Size (B)   & 658.0 & 0.0 & 1778.8 & 0.4 & 1327.4 & 0.5 \\
& Response Size (B)  & 588.0 & 0.0 & 949.0 & 0.0 & 588.0 & 0.0 \\
\hline
\end{tabular}%
}
\end{table*}

In addition, the processing time for generating 10 signatures in the API-based mode was measured at 0.78 s, involving a request size of 3676 B and yielding a response size of 7019 B.

\begin{figure}[!t]
	\centering
	\includegraphics [width=1\linewidth ]{./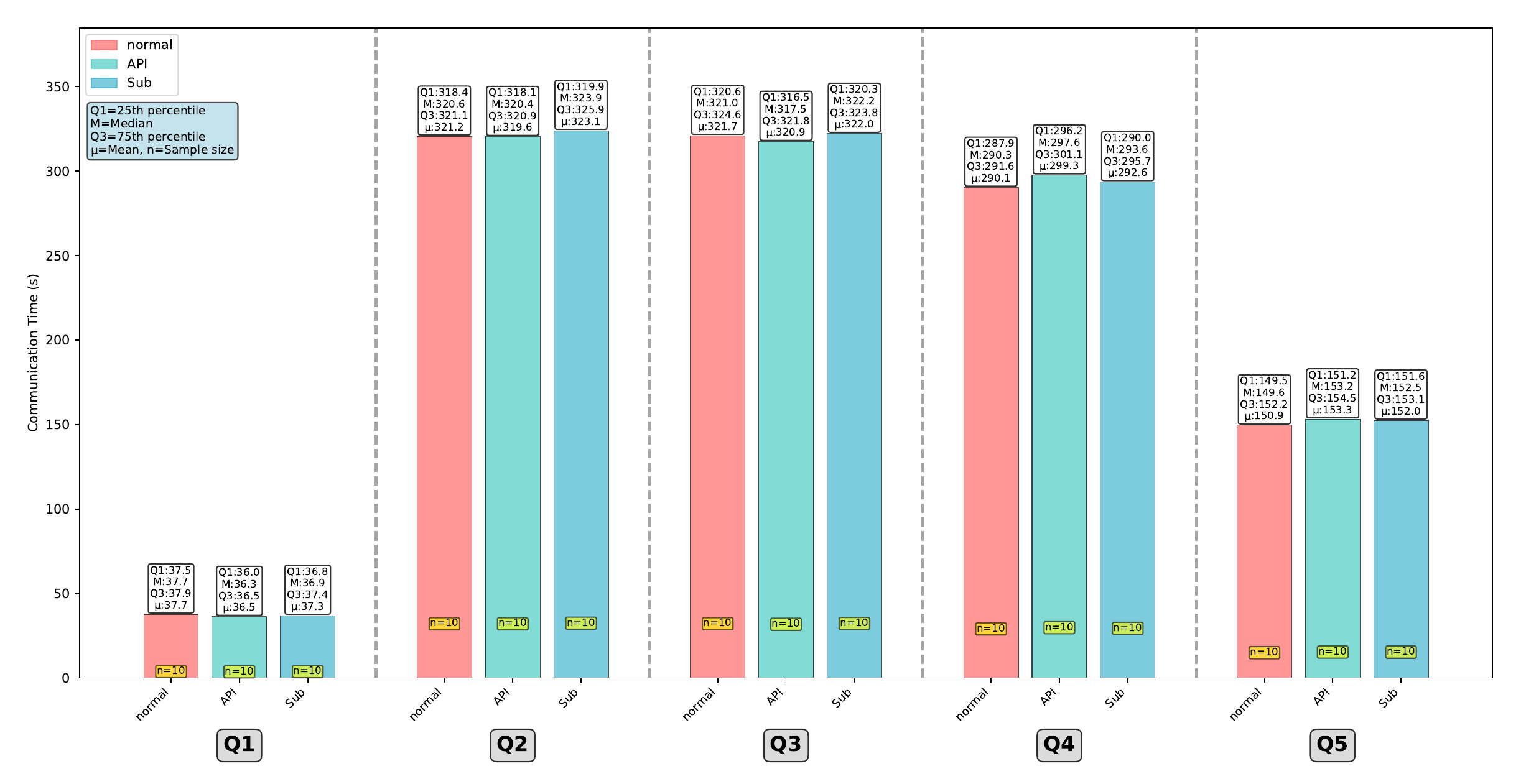}
	\caption{Q1 - Q5 communication time momparison.}
	\label{Figure:Q1 - Q5 Communication Time Comparison(14B)}
\end{figure}

\begin{figure}[!t]
	\centering
	\includegraphics [width=1\linewidth ]{./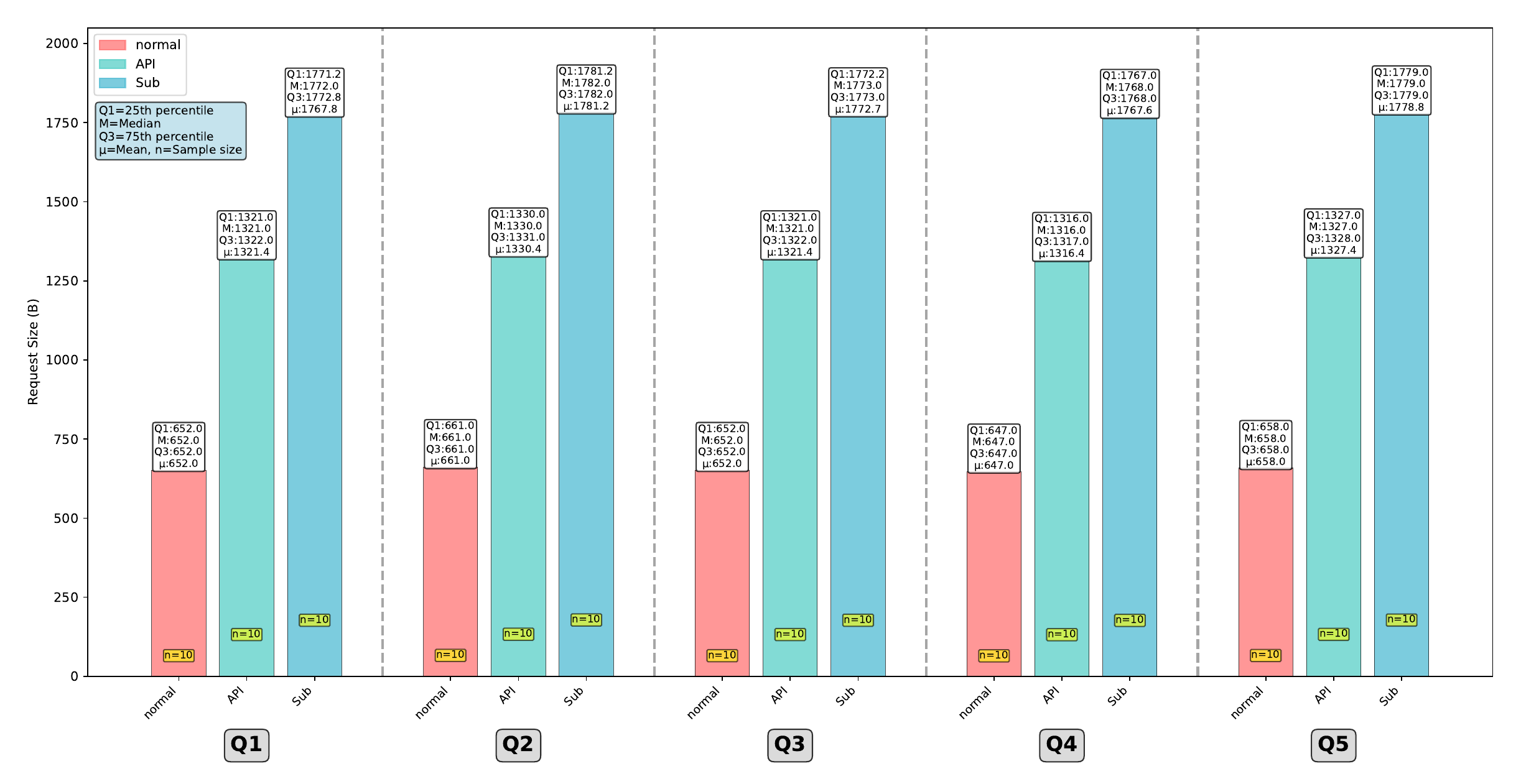}
	\caption{Q1 - Q5 communication request size comparison.}
	\label{Figure:Q1 - Q5 Request Size Comparison(14B)}
\end{figure}

\begin{figure}[!t]
	\centering
	\includegraphics [width=1\linewidth ]{./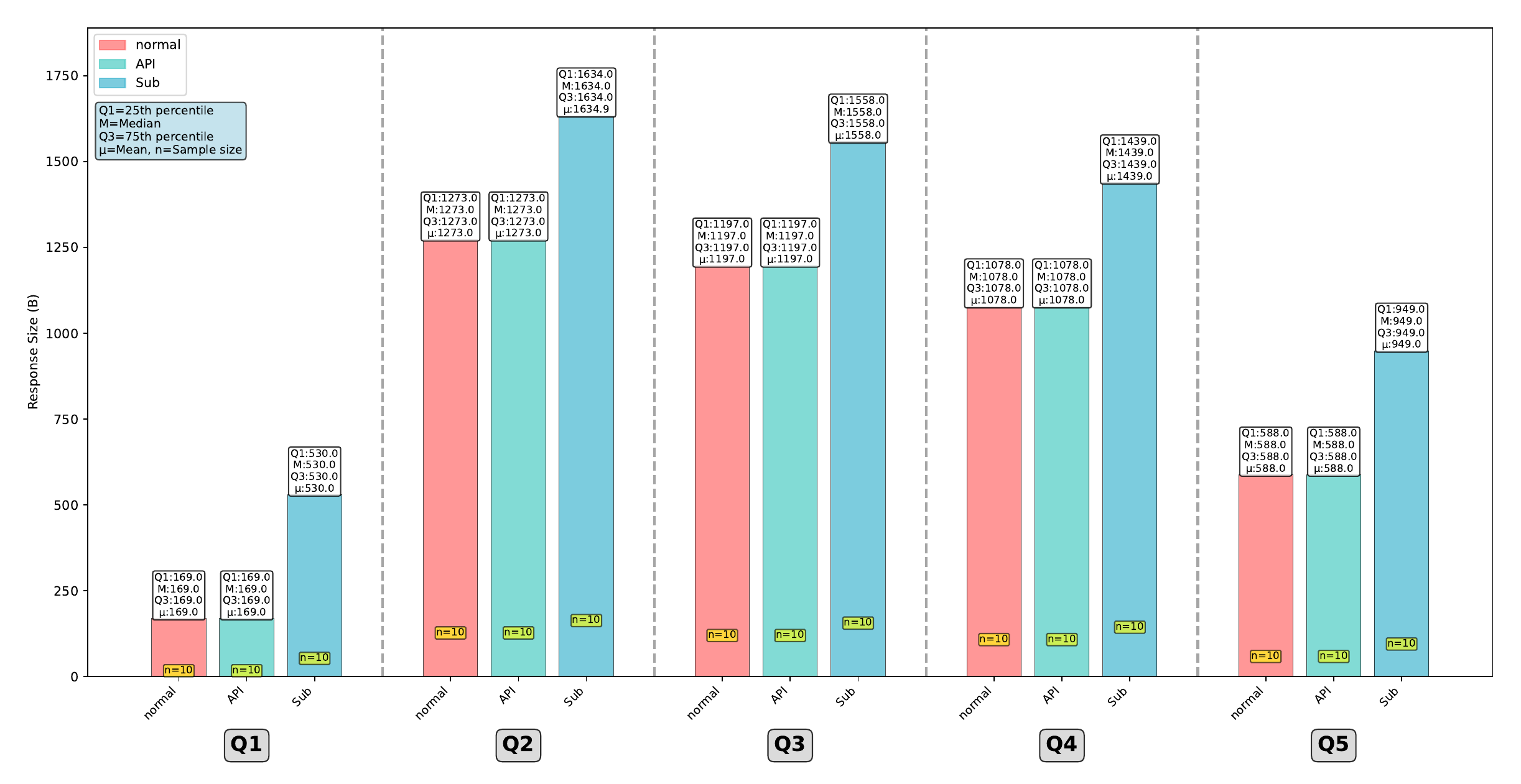}
	\caption{Q1 - Q5 communication response size comparison.}
	\label{Figure:Q1 - Q5 Response Size Comparison(14B)}
\end{figure}

Simultaneously, to reflect the stability of the differences between our framework and the normal framework, we calculated the differences between the data of the API mode, SUB mode and normal mode, and then drew the box plot in Figs.~\ref{Figure:Q1 - Q5 Communication Time differentials Comparison(14B)},\ref{Figure:Q1 - Q5 Request Size differentials Comparison(14B)},\ref{Figure:Q1 - Q5 Response Size differentials Comparison(14B)} based on these differences.

To further demonstrate that our framework does not introduce exponential overhead as the number of concurrent users increases, and that it remains applicable across different model providers, we additionally conducted a set of concurrency experiments. As shown in Figs.~\ref{Figure:Different Numbers of Users Communication Time Comparison(Deepseek 14B)},\ref{Figure:Different Numbers of Users Request Size Comparison(Deepseek 14B)},\ref{Figure:Different Numbers of Users Response Size Comparison(Deepseek 14B)}, we report the average overhead of the DeepSeek-R1-Distill-Qwen-14B model under three modes (api, sub, and normal) with 5 and 10 concurrent users. In each setting, the users repeatedly queried the system with five long questions LQ1-–LQ5 (in \ref{app2}) over five consecutive rounds. The results confirm that the overhead remains stable as concurrency increases and that the proposed framework maintains consistent performance across varying system loads.

\begin{figure}[!t]
	\centering
	\includegraphics [width=1\linewidth ]{./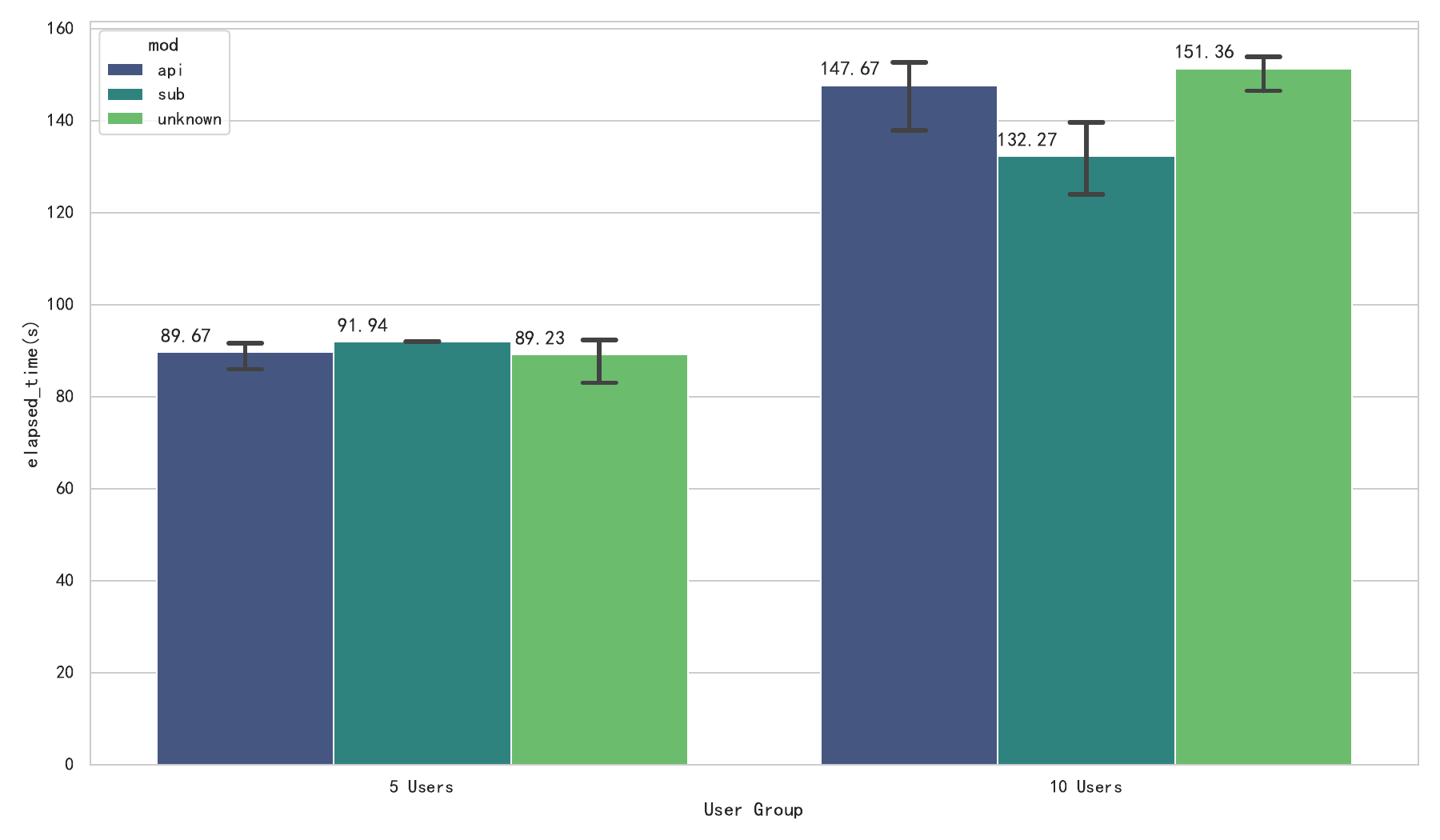}
	\caption{Different Numbers of Users Communication Time Comparison(Deepseek 14B).}
	\label{Figure:Different Numbers of Users Communication Time Comparison(Deepseek 14B)}
\end{figure}

\begin{figure}[!t]
	\centering
	\includegraphics [width=1\linewidth ]{./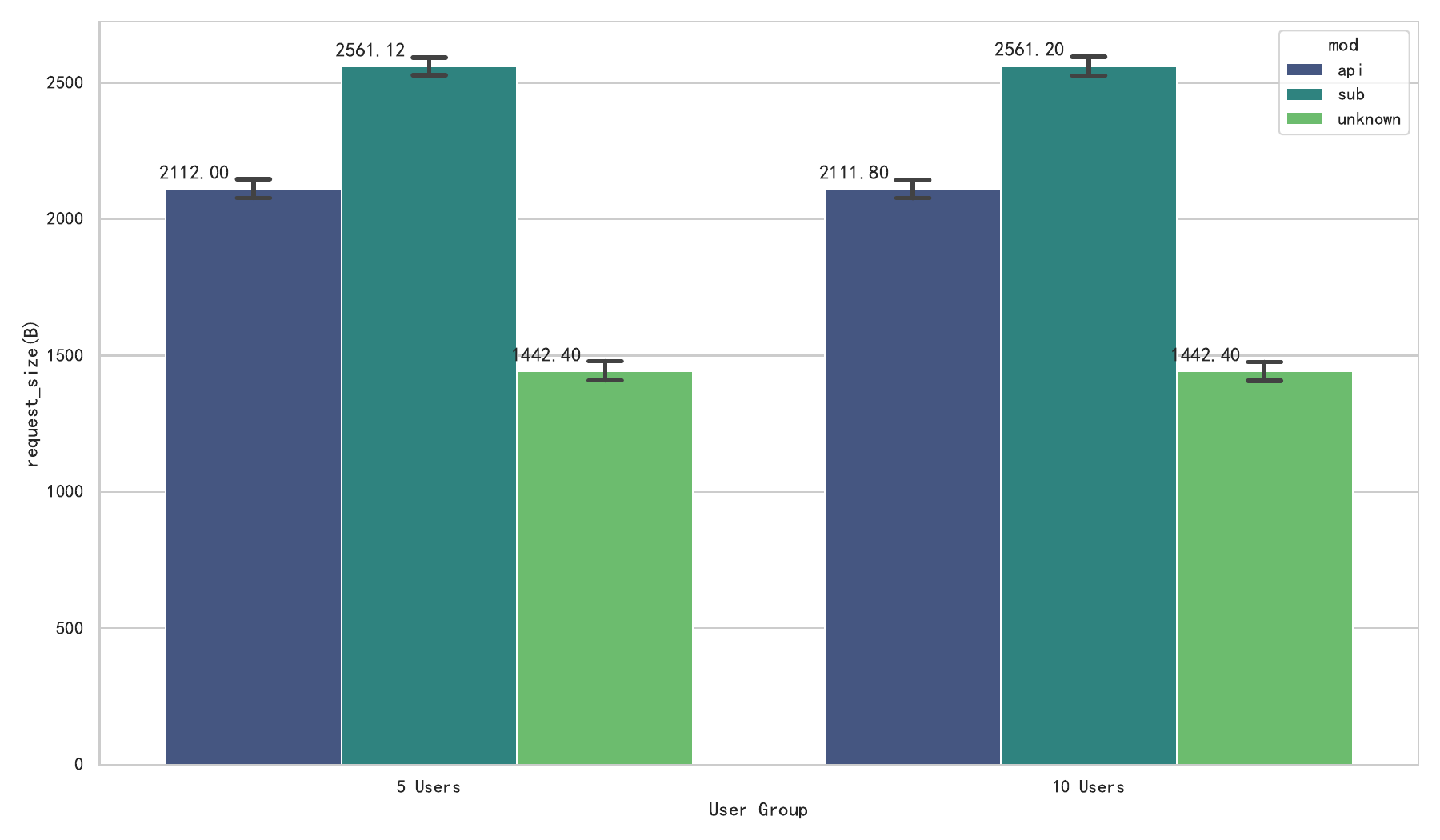}
	\caption{Different Numbers of Users Request Size Comparison(Deepseek 14B).}
	\label{Figure:Different Numbers of Users Request Size Comparison(Deepseek 14B)}
\end{figure}

\begin{figure}[!t]
	\centering
	\includegraphics [width=1\linewidth ]{./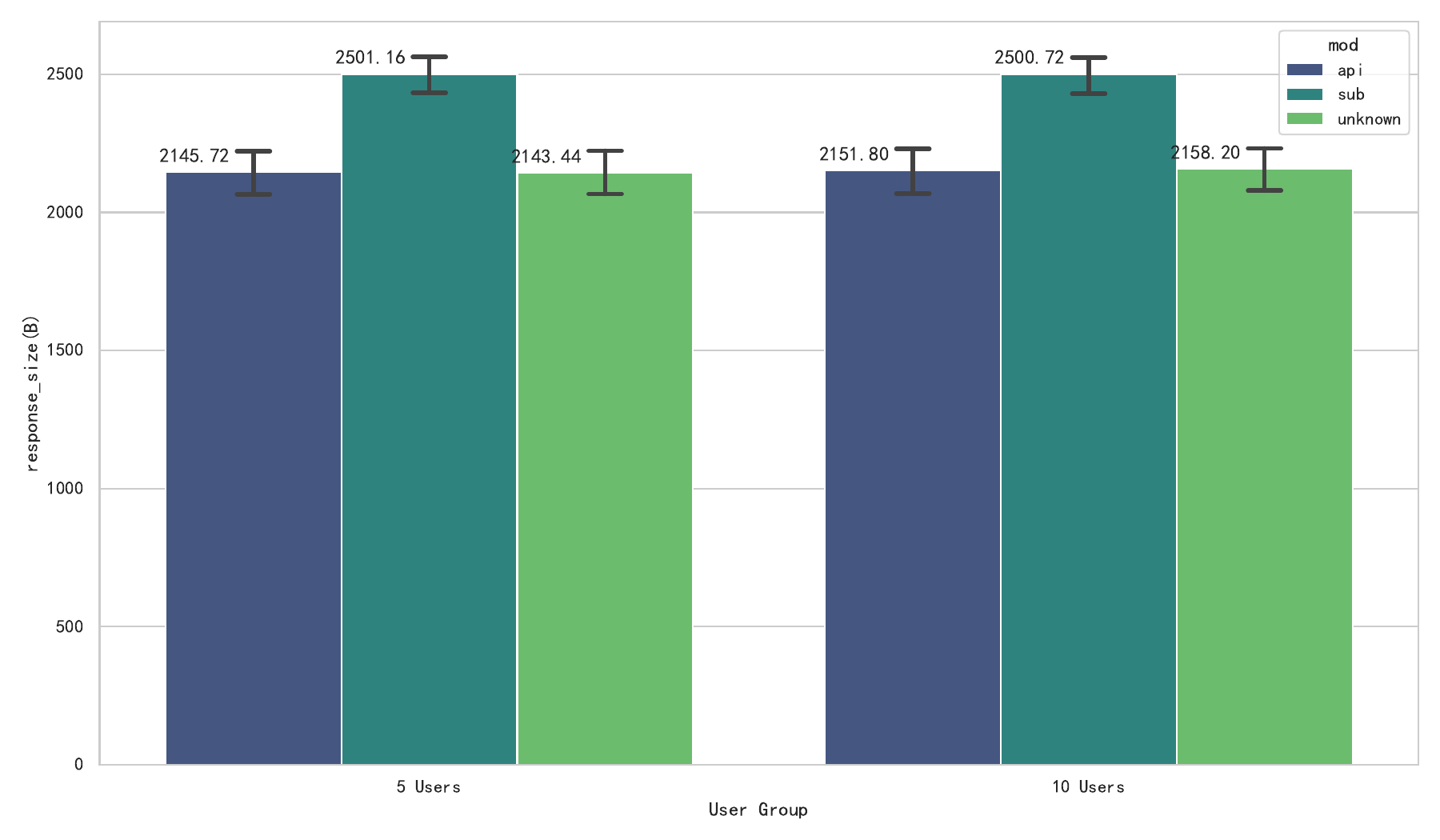}
	\caption{Different Numbers of Users Response Size Comparison(Deepseek 14B).}
	\label{Figure:Different Numbers of Users Response Size Comparison(Deepseek 14B)}
\end{figure}

Similarly, in Figs.~\ref{Figure:Different Numbers of Users Communication Time Comparison(Qwen2.5 14B)},\ref{Figure:Different Numbers of Users Request Size Comparison(Qwen2.5 14B)},\ref{Figure:Different Numbers of Users Response Size Comparison(Qwen2.5 14B)}, we present the average overhead of the Qwen2.5-14B model under the three modes (api, sub, and normal) with 5 and 10 concurrent users. In each configuration, the users issued five long questions LQ1–-LQ5(in \ref{app2}) in five successive rounds.

\begin{figure}[!t]
	\centering
	\includegraphics [width=1\linewidth ]{./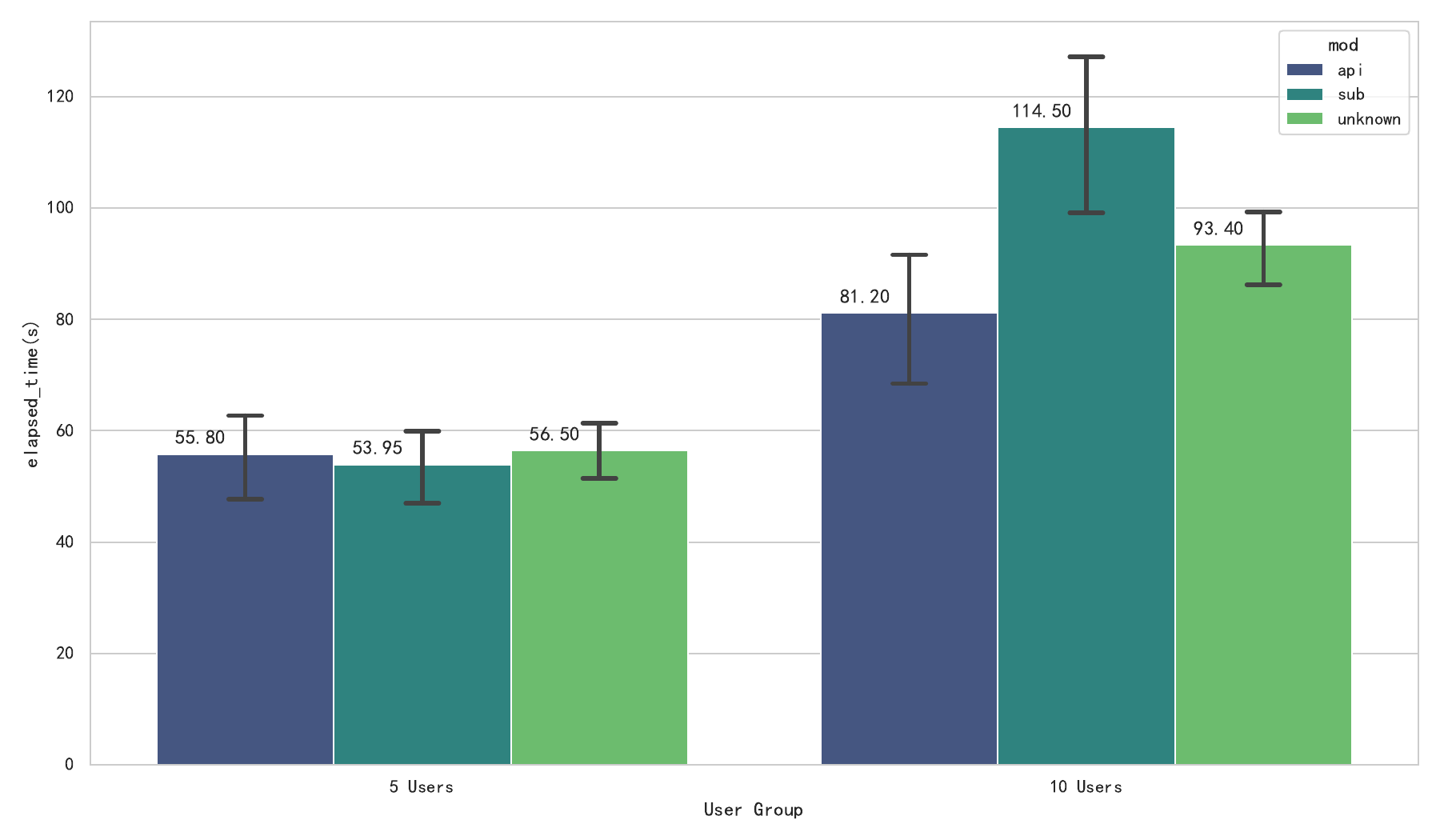}
	\caption{Different Numbers of Users Communication Time Comparison(Qwen2.5 14B).}
	\label{Figure:Different Numbers of Users Communication Time Comparison(Qwen2.5 14B)}
\end{figure}

\begin{figure}[!t]
	\centering
	\includegraphics [width=1\linewidth ]{./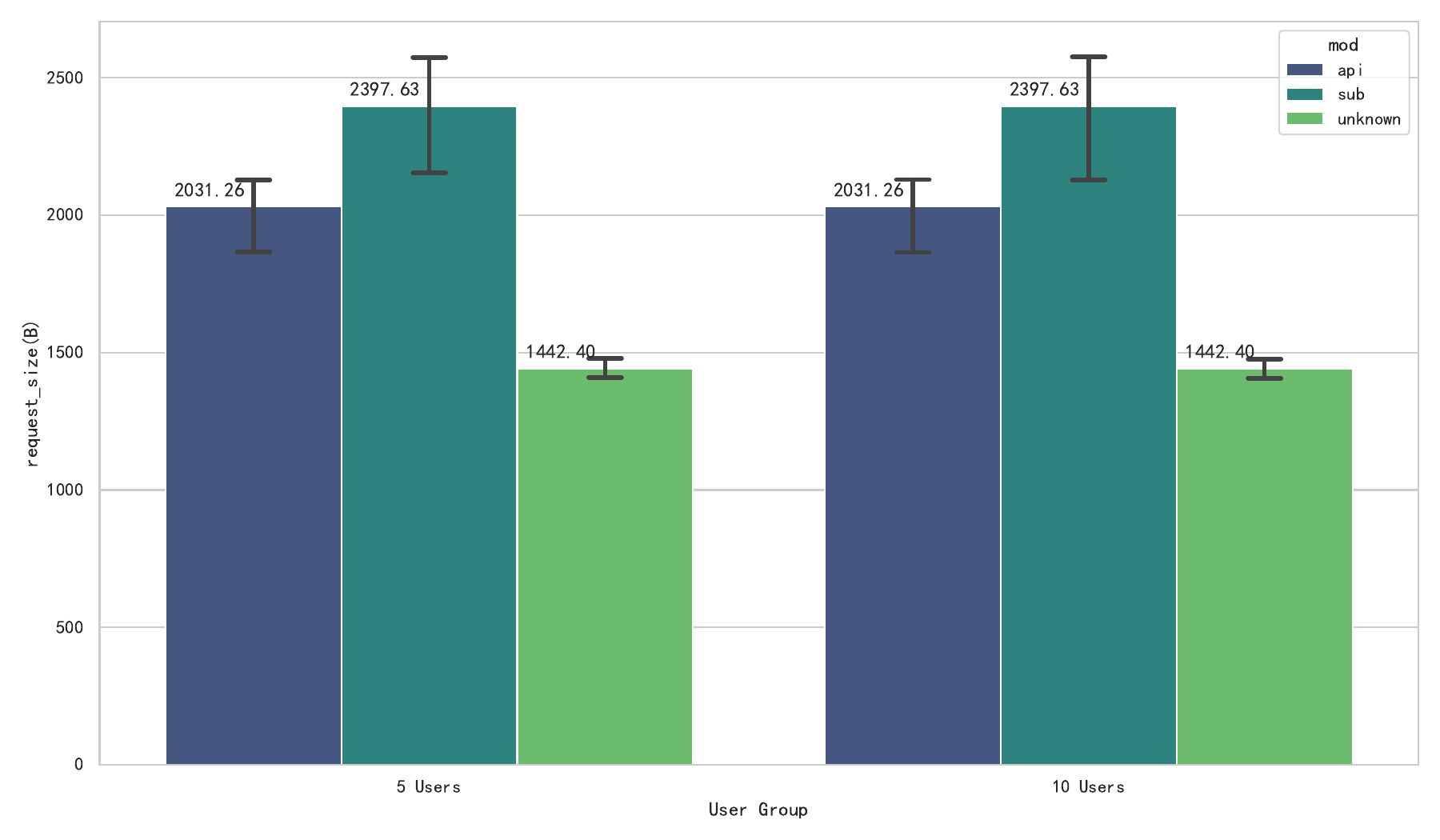}
	\caption{Different Numbers of Users Request Size Comparison(Qwen2.5 14B).}
	\label{Figure:Different Numbers of Users Request Size Comparison(Qwen2.5 14B)}
\end{figure}

\begin{figure}[!t]
	\centering
	\includegraphics [width=1\linewidth ]{./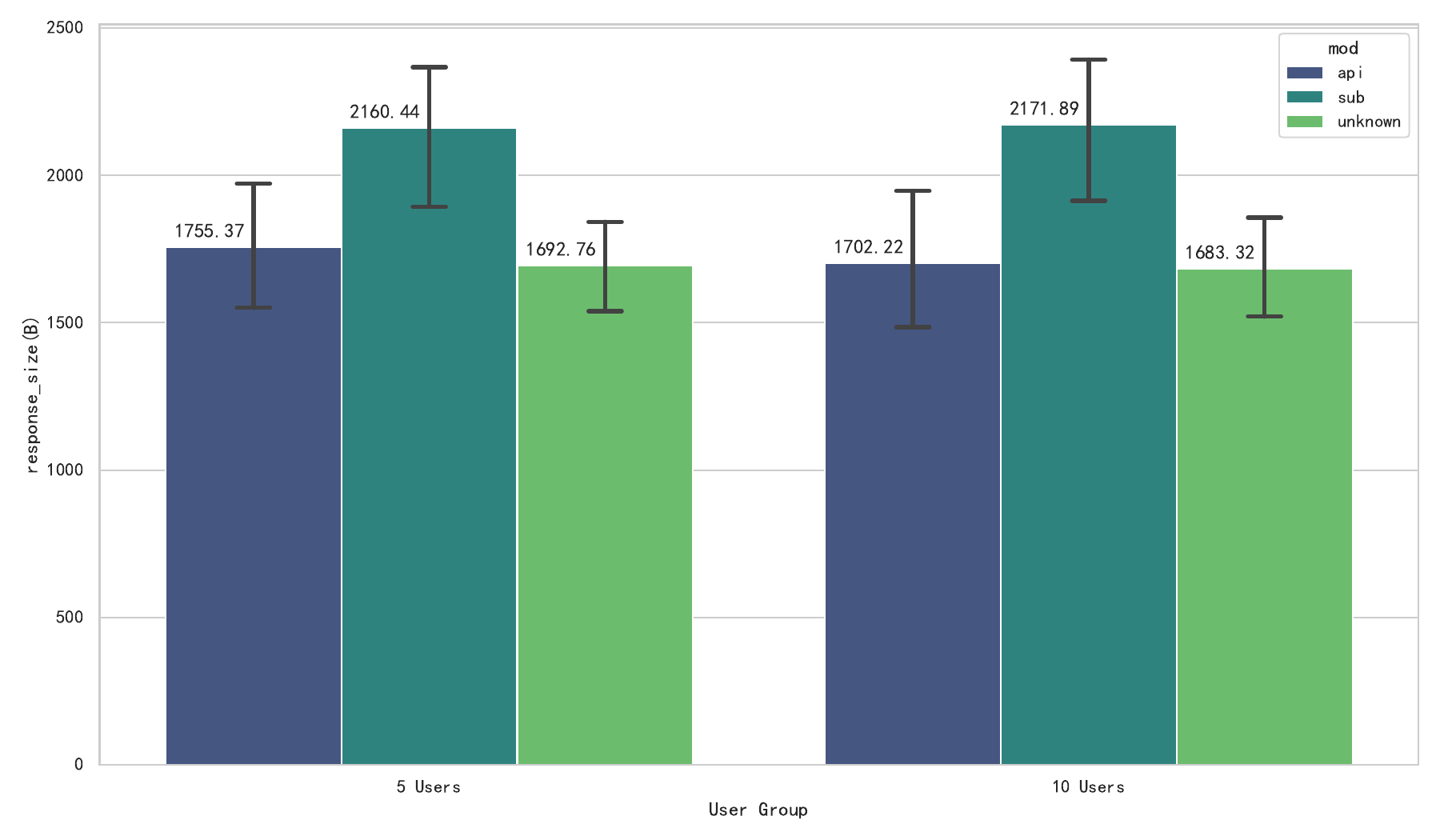}
	\caption{Different Numbers of Users Response Size Comparison(Qwen2.5 14B).}
	\label{Figure:Different Numbers of Users Response Size Comparison(Qwen2.5 14B)}
\end{figure}


\textbf{Computation and Communication Overhead Analysis:}

In this section, we examine the computation and communication overhead introduced by our framework. We analyse the performance differences between our proposed modes and the baseline to quantify the precise cost of implementing the security layer.

Computation Overhead. The computational overhead is primarily composed of the cryptographic operations required for generating, blinding, signing, and verifying the messages. Fig.~\ref{Figure:Q1 - Q5 Communication Time differentials Comparison(14B)} shows the distribution of the additional time incurred by the Subscription and API modes compared to the baseline. The mean time difference was consistently low across all experiments. The results showed that the impact on performance was minimal, though its statistical significance varied depending on the query's characteristics. For instance, for Q1, a two-sample t-test revealed a statistically significant time difference for the API mode compared to the baseline (p = 0.0007). However, the absolute mean overhead was small, at only 1.2 s. This indicates that while the overhead was consistent enough to be statistically detectable, it was practically negligible in magnitude. In contrast, for Q2, the time differences were identified to be not statistically significant for either the API mode (p = 0.3600) or the Subscription mode (p = 0.3264). This suggests that for this query, the minor fluctuations in response time were well within the range of normal system variability, and no systematic overhead could be attributed to our scheme. Considered together, these results confirmed that the computational burden of our framework was exceptionally low. The overhead was either so small that it was statistically indistinguishable from baseline noise (as in Q2), or it was of a practically insignificant magnitude even when statistically detectable (as in Q1), making our scheme highly suitable for real-world deployment.

Communication Overhead. The primary overhead of our system lies in the increased data payload required to facilitate the blind signature protocol, as shown in Figs.~\ref{Figure:Q1 - Q5 Request Size differentials Comparison(14B)} and \ref{Figure:Q1 - Q5 Response Size differentials Comparison(14B)}. For the request, the size increased owing to the inclusion of the user's public key and the blinded message digest. For the response, particularly in the subscription mode, the size increased because it must contain the blind signature issued by the LLM provider. This communication overhead was a direct and necessary trade-off for establishing a verifiable and privacy-preserving communication channel. Given the critical security guarantees it provides, we argue that this overhead is acceptable and well-justified for the target applications. When comparing the two proposed modes, the subscription mode consistently incurred a larger data overhead than the API mode, because the former involved more complex interactions for maintaining subscription state and handling signature issuance.

\begin{figure}[!t]
	\centering
	\includegraphics [width=1\linewidth ]{./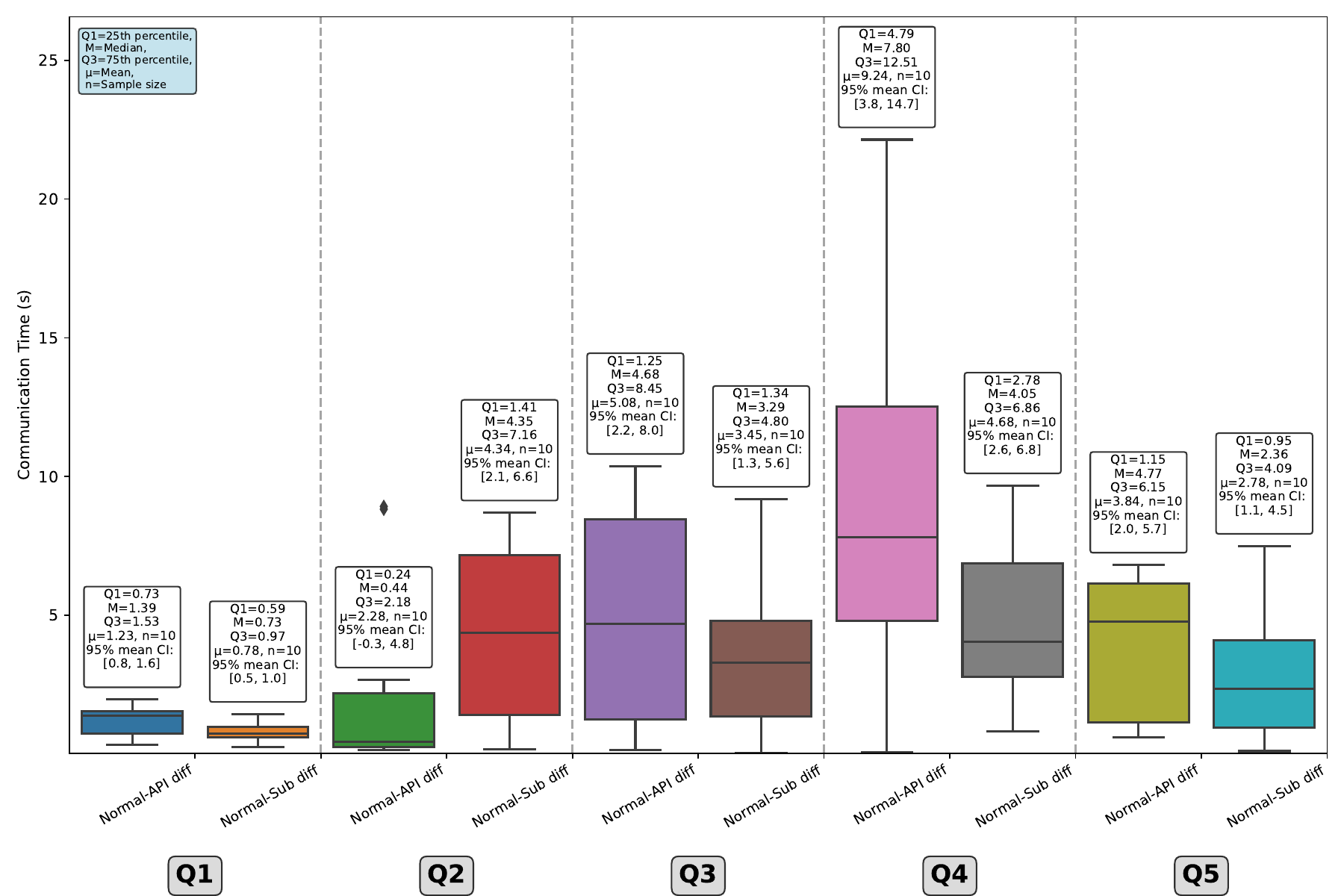}
	\caption{Q1 - Q5 communication time differentials comparison.}
	\label{Figure:Q1 - Q5 Communication Time differentials Comparison(14B)}
\end{figure}

\begin{figure}[!t]
	\centering
	\includegraphics [width=1\linewidth ]{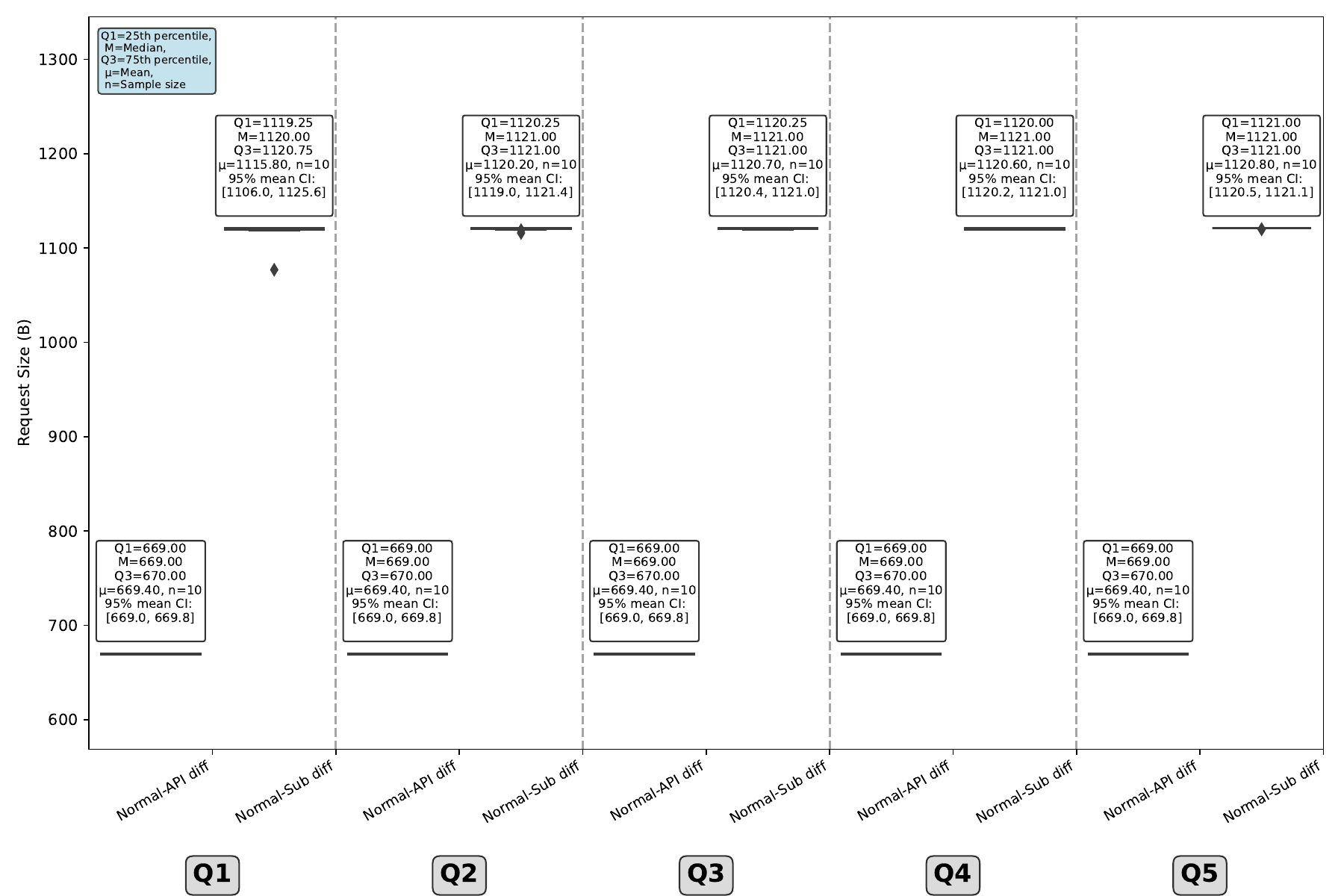}
	\caption{Q1 - Q5 communication request size differentials comparison.}
	\label{Figure:Q1 - Q5 Request Size differentials Comparison(14B)}
\end{figure}

\begin{figure}[!t]
	\centering
	\includegraphics [width=1\linewidth ]{./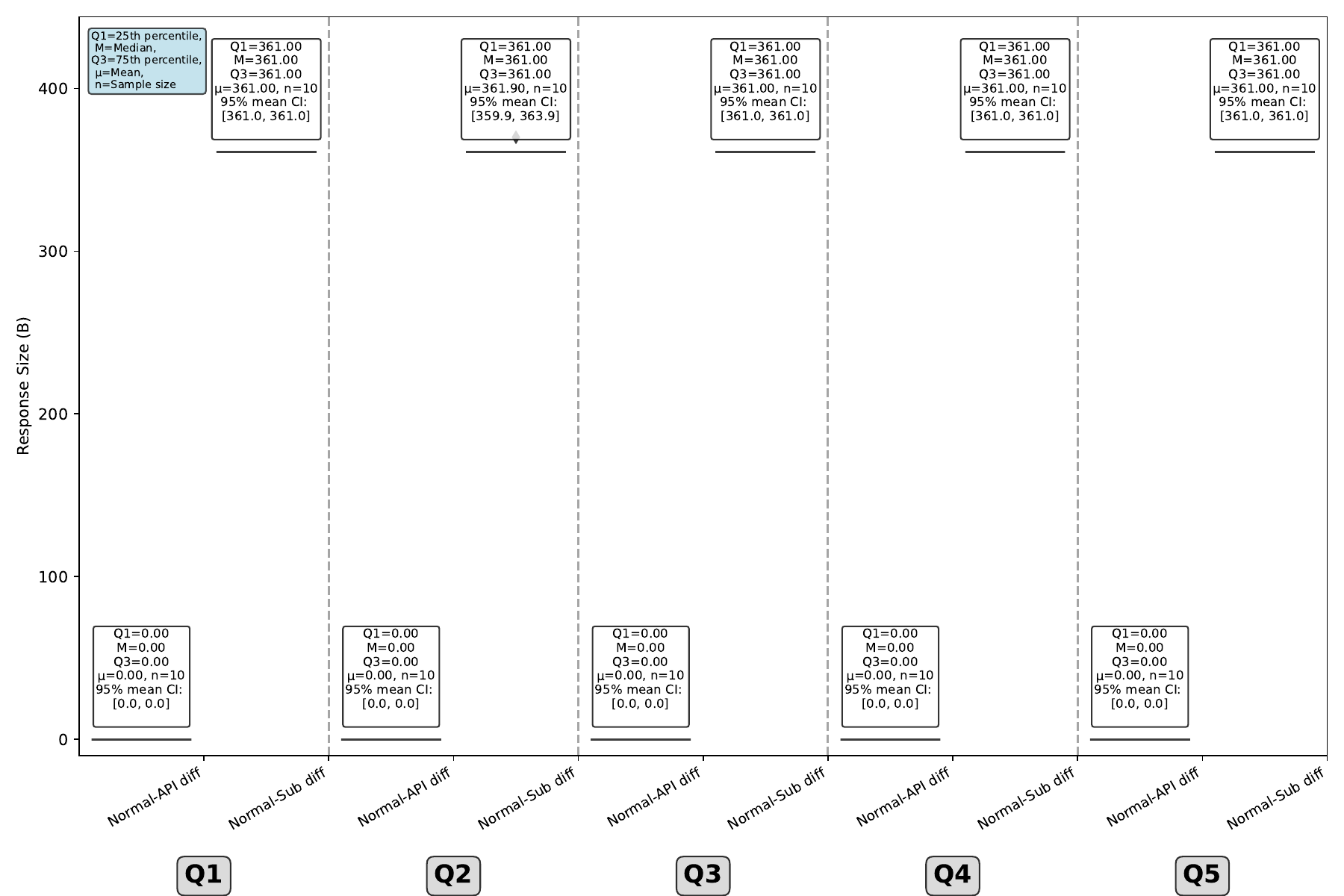}
	\caption{Q1 - Q5 communication response size differentials comparison.}
	\label{Figure:Q1 - Q5 Response Size differentials Comparison(14B)}
\end{figure}

In addition to the per-request overhead discussed above, we further evaluated whether the proposed framework exhibits stable behaviour under realistic multi-user concurrency—an essential characteristic for practical AIaaS deployments. Our experiments with different numbers of concurrent users show that the additional overhead introduced by our framework does not grow superlinearly with increasing user load. Although higher concurrency inevitably leads to slightly longer processing time, this increase is attributable to the underlying system’s need to maintain a task queue, which is a limitation of the deployment environment and hardware rather than of our framework.
A noteworthy observation is that, theoretically, the normal mode should incur latency equal to or lower than that of the API mode and strictly lower than that of the subscription mode. However, some of our measurements deviate from this expectation. This discrepancy arises from fluctuations in HTTPS transmission speed, as our framework relies on HTTPS-based communication. This discrepancy is primarily attributable to fluctuations in network performance. To verify this, we conducted TCP, UDP, and ping connectivity tests on servers within the same region, with the results summarised in Table \ref{tab:net_test_results}. Furthermore, because our framework communicates with the LLM provider over HTTPS at the application layer, we also evaluated temporal variations in HTTPS performance by downloading 10 MB, 100 MB, and 1000 MB files from the server at half-hour intervals. The corresponding measurements are presented in Fig.~\ref{Figure:Network Test}. The results show substantial temporal variability when downloading the same file over HTTPS, confirming that such inconsistencies stem from network characteristics rather than from our framework.

\begin{table*}[ht]
\centering
\caption{Network Test Result.}
\label{tab:net_test_results}
\resizebox{\textwidth}{!}{%
\begin{threeparttable}
    \begin{tabular}{c|c|c|c|c|c}
        \hline
         \textbf{Metric}&\textbf{Method} & \textbf{12:00-13:00} & \textbf{13:00-14:00} & \textbf{20:43-21:43} & \textbf{23:28-00:28} \\
        
        \hline
        \multirow{3}{*}{\textbf{Bandwidth}}
        &TCP UPLOAD(Mbits/s) &49.3 94.1 94.2  & 93.2  88.9 93.9& 90.2 94.3 94.1& 94.0 94.3 94.2\\
        
        \cline{2-6}
        &TCP DOWNLOAD(Mbits/s) &23.4  5.88 25.4& 6.2 70.0 52.6& 57.4 50.6 65.3& 56.2 57.5 73.6\\
        
        \cline{2-6}
        &UDP(Kbits/s)      &995 995 955& 995 995 995& 995 996 996& 996 996 995  \\
        
        \hline
        \multirow{3}{*}{\textbf{RTT (Round-Trip Time)(ms)}}& \multirow{3}{*}{Ping} & 
        Max=82&Max=82 &Max=81 &Max=80  \\
        
        \cline{3-6}
        & &
        Min=22  &Min=22 &Min=22 &Min=22 \\
        \cline{3-6}
        & & 
        Avg=24  & Avg=24 &Avg=23 &Avg=23 \\
        
        \hline
        \textbf{Packet Loss Rate(\textperthousand)}& Ping &23.6 & 6.1 & 0 & 1.4  \\
        
        \hline
    \end{tabular}
    \begin{tablenotes}  
      \small
      \item[*] It should be noted that each cell in the "Bandwidth" row contains three data points. This is because during our tests, we conducted three tests per hour, thus recording three sets of data. For instance, in the "TCP UPLOAD" row and the "12:00 - 13:00" column, 49.3 represents the test data from 12:00 to 12:20, 94.1 represents the data from 12:20 to 12:40, and 94.2 represents the data from 12:40 to 13:00. The rest of the data follows a similar pattern.
    \end{tablenotes}
\end{threeparttable}
}
\end{table*}
\begin{figure}[!htbp]
	\centering
	\includegraphics [width=0.8\linewidth ]{./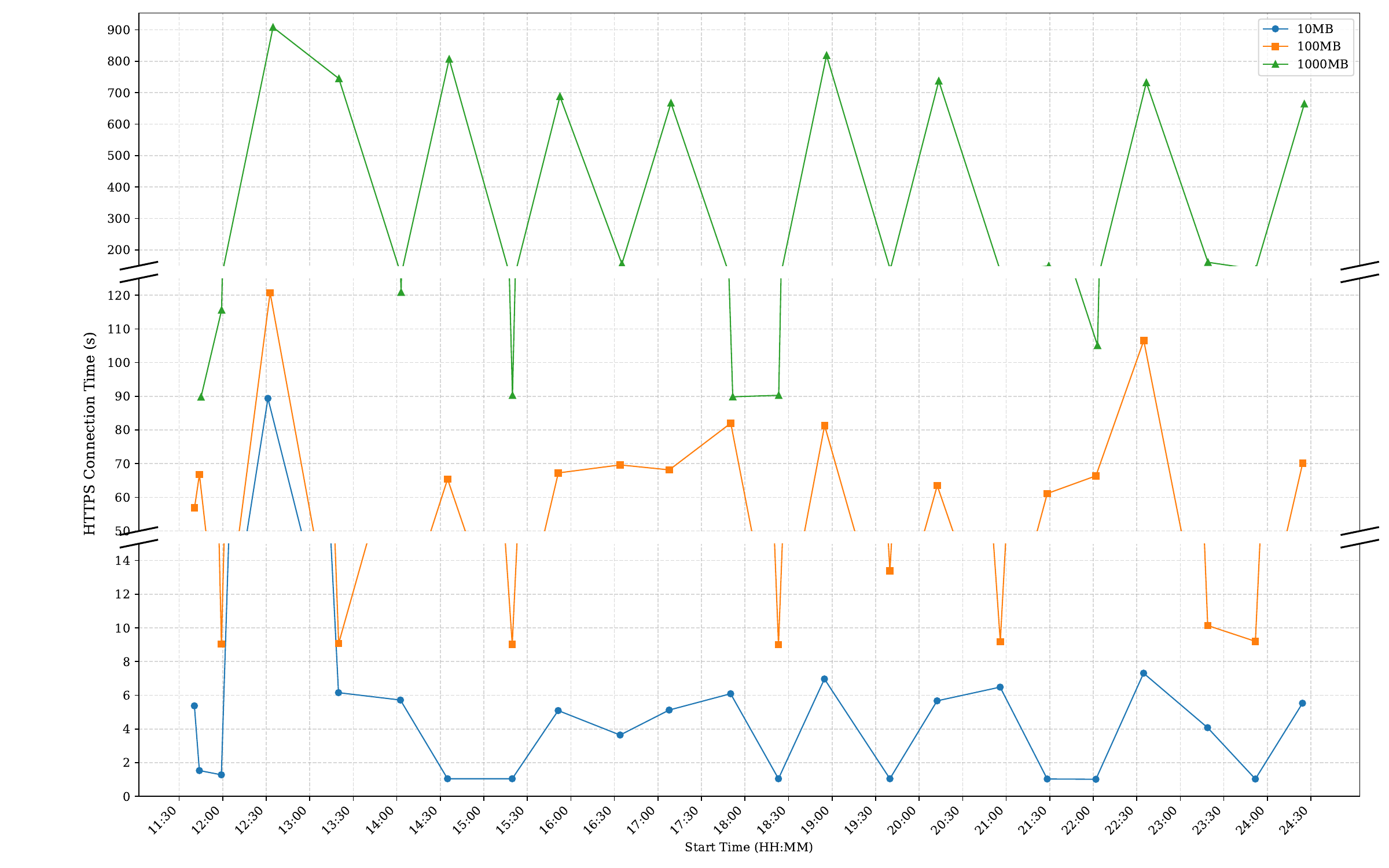}
	\caption{Network conditions for downloading 10MB ,100MB and 1000MB files via HTTPS protocol}
	\label{Figure:Network Test}
\end{figure}
Moreover, our evaluation across models from different institutions indicates that the choice of LLM affects only the absolute request time, request size, and response size, but not the relative differences among the normal, API, and subscription modes. These inter-mode differences originate solely from the signature operations introduced in our framework, demonstrating that the framework is model-agnostic and can be seamlessly applied to heterogeneous LLM providers.

Our previous experiment showed that the processing time and communication overhead required to generate multiple signatures were acceptable for practical use. Moreover, our framework allowed for flexible replacement of the underlying partially blind signature scheme. For instance, the RSA-based partially blind signature used in this study could be replaced with a batch partially blind signature scheme, which generated multiple signatures at once. Note that the extra communication overhead was not a performance bottleneck. Experimental results indicate that our frameworks introduced minimal differences in processing time. This demonstrates that our solution maintained user privacy without incurring significant resource consumption, achieving performance comparable to the original LLM.If readers analyse the experimental data of the DeepSeek-R1-Distill-Llama-8B model in the \ref{app1}, they will also reach the same conclusion. This further indicates that our solution is not dependent on the selection of the model.

Theoretically, compared to the original LLM, the additional overhead of our solution comes from processing the partially blind signatures between the client and server. During the blinding phase, one modular multiplication and one modular exponentiation operation are required. In the signing phase, two modular exponentiation operations are required. In the de-blinding phase, one modular multiplication and one modular exponentiation operation are required. The verification phase requires one modular exponentiation operation. Overall, a single signature incurs an additional cost of two modular multiplications and five modular exponentiations. In addition, hash operations are also required during the signature process; the hash function $H_M$ needs to be executed twice for each signature, while the $H_{MD}$ hash function needs to be executed only once within a certain period. Details of the partially blind signature scheme used in this study can be found in the study \cite{RSABlindSignatureswithPublicMetadata}.

\section{Conclusion}
This study presents a privacy-preserving framework for current LLM-based online service models, designed to ensure user anonymity. The framework supports both subscription-based and API-based service modes, using partially blind signatures to protect user identities while enabling service providers to manage access and prevent misuse, such as unauthorised usage. In addition, this framework functions as an additional layer that can be seamlessly integrated into existing systems. Our experimental results demonstrate that the proposed framework introduces minimal computational and communication overhead to the original system, maintaining efficiency with improved privacy.Meanwhile, since this framework is lightweight, it can be combined with other encryption schemes to achieve different functions.

\section*{Acknowledgment}

The authors would like to thank the editor and anonymous reviewers for
their constructive comments and suggestions.

\newpage

\printbibliography

\newpage
 \appendix
 \section{DeepSeek-R1-Distill-Llama-8B model experimental result graph}
 \label{app1}
 
\begin{figure}[!htbp]
	\centering
	\includegraphics [width=0.8\linewidth ]{./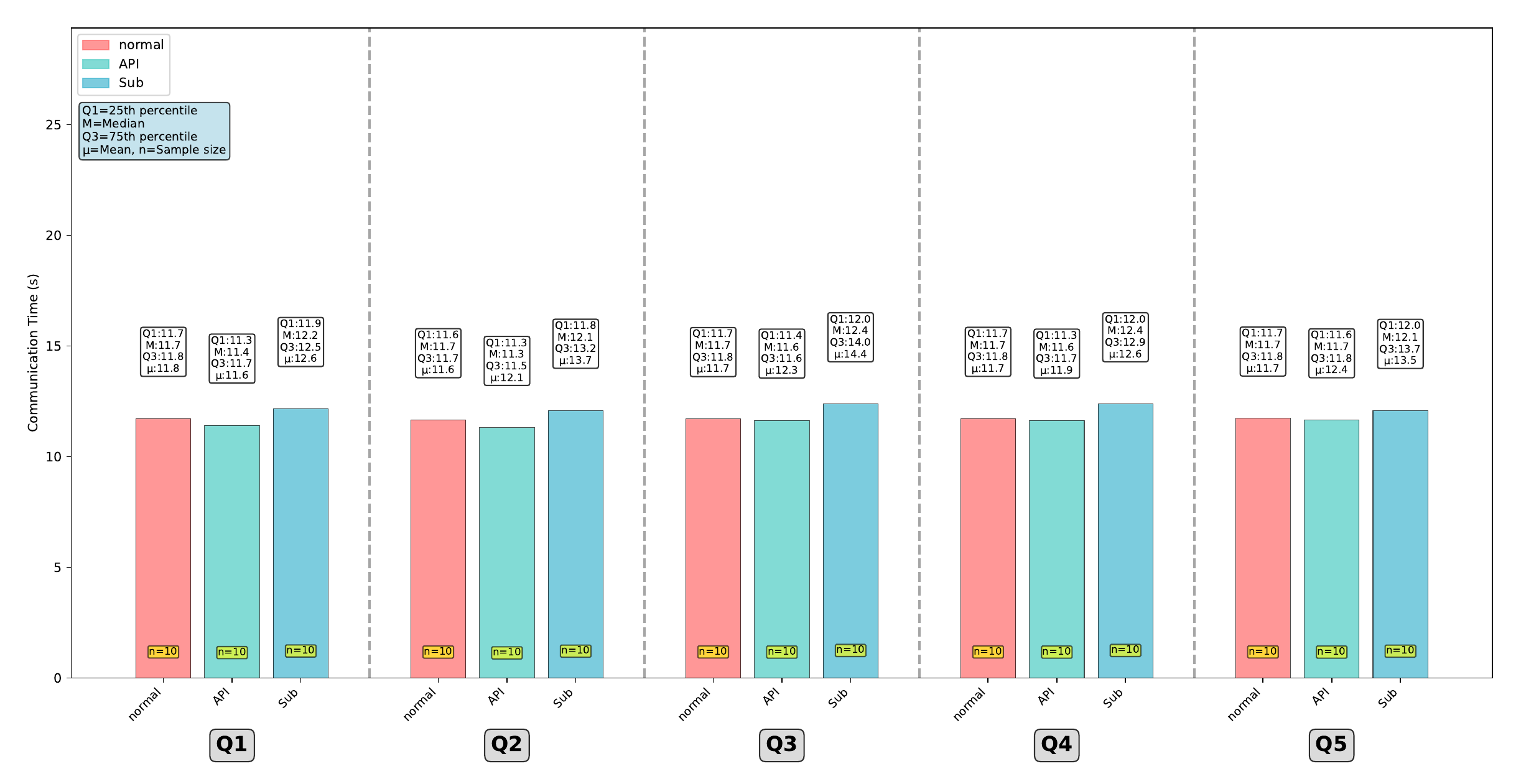}
	\caption{Q1 - Q5 communication time comparison(DeepSeek-R1-Distill-Llama-8B).}
	\label{Figure:Q1 - Q5 Communication Time Comparison(8B)}
\end{figure}

\begin{figure}[!htbp]
	\centering
	\includegraphics [width=0.8\linewidth ]{./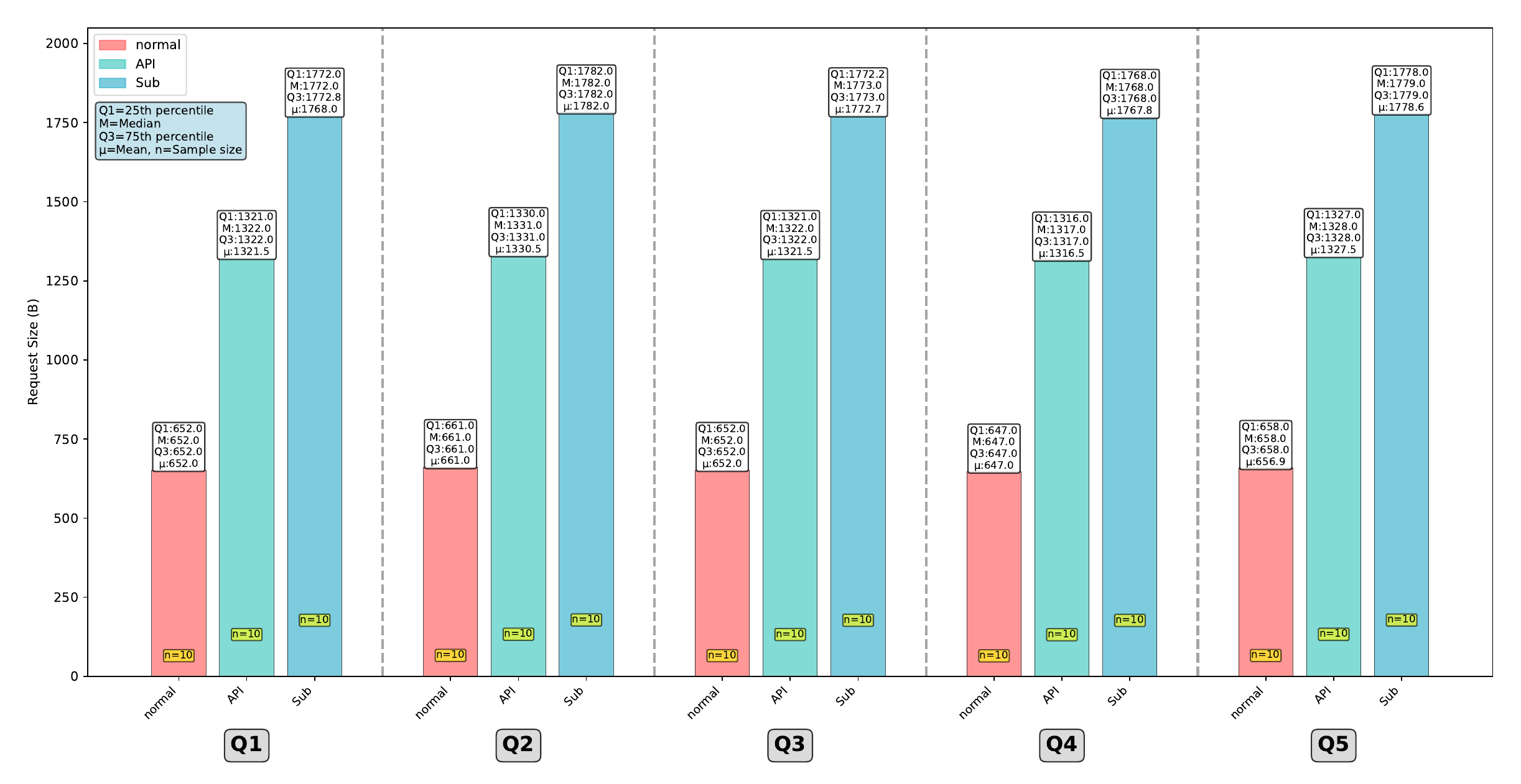}
	\caption{Q1 - Q5 communication  request size comparison(DeepSeek-R1-Distill-Llama-8B).}
	\label{Figure:Q1 - Q5 Request Size Comparison(8B)}
\end{figure}

\begin{figure}[!htbp]
	\centering
	\includegraphics [width=0.8\linewidth ]{./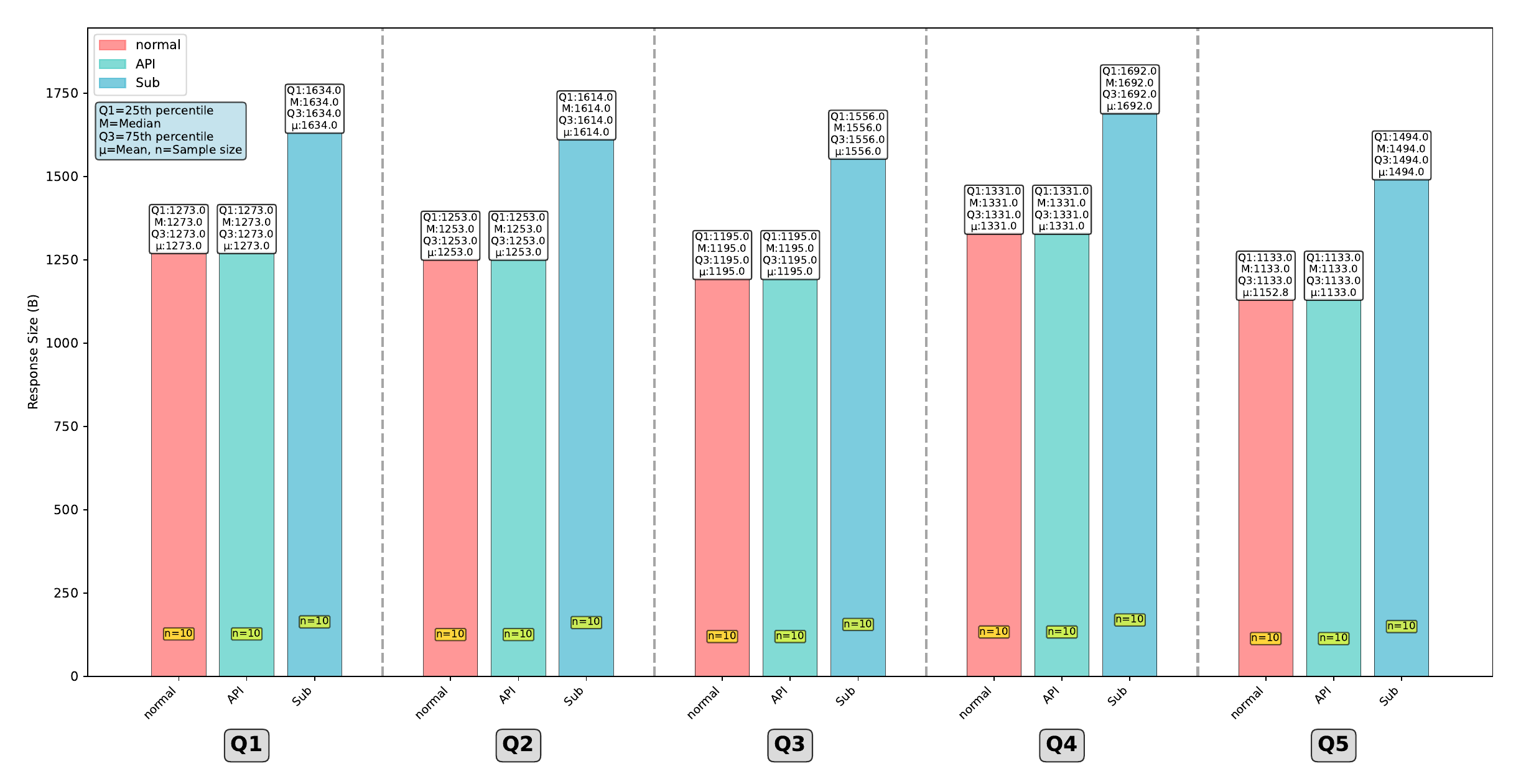}
	\caption{Q1 - Q5 communication response size comparison(DeepSeek-R1-Distill-Llama-8B).}
	\label{Figure:Q1 - Q5 Response Size Comparison(8B)}
\end{figure}

\begin{figure}[!htbp]
	\centering
	\includegraphics [width=0.8\linewidth ]{./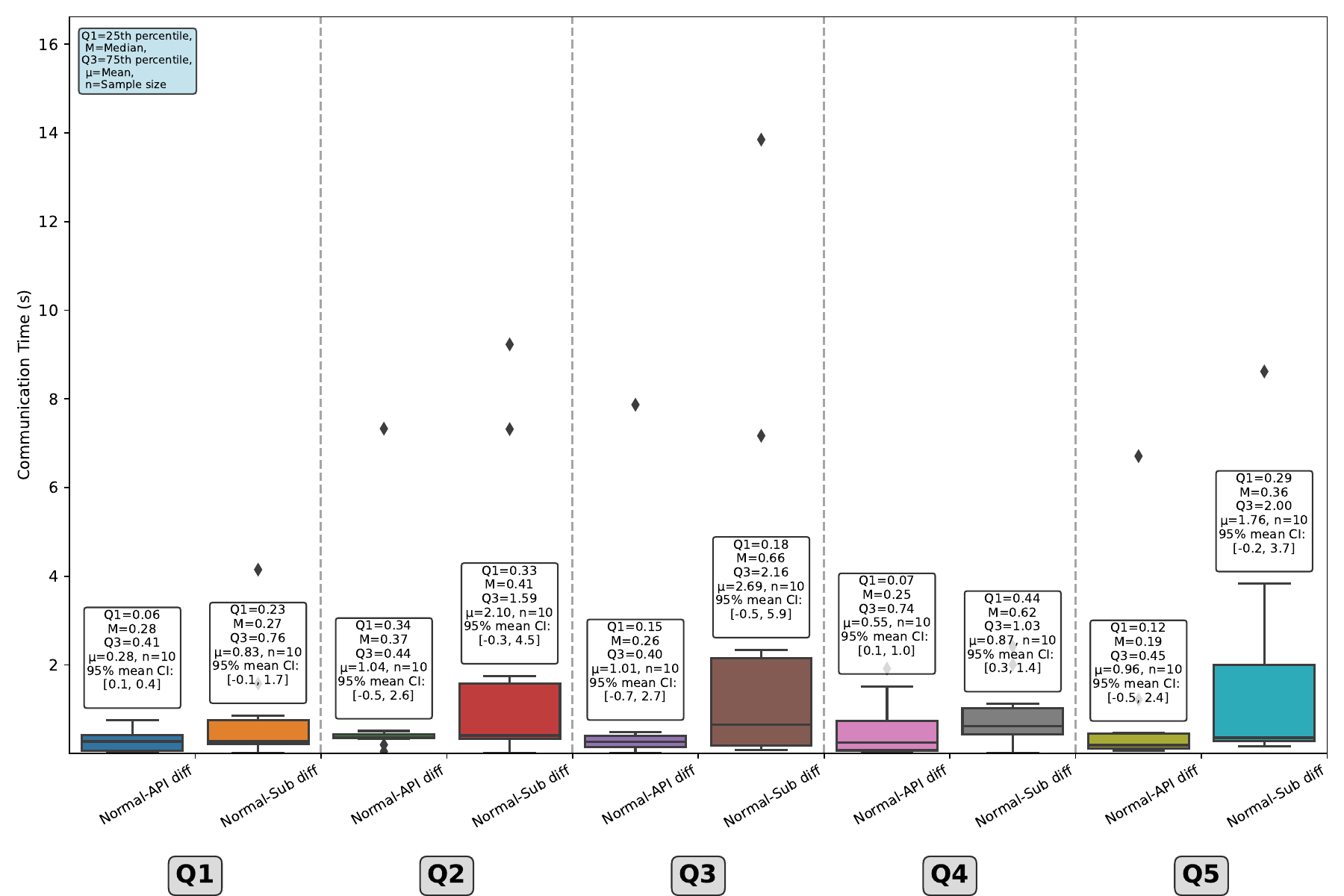}
	\caption{Q1 - Q5 communication time differentials comparison(DeepSeek-R1-Distill-Llama-8B).}
	\label{Figure:Q1 - Q5 Communication Time differentials Comparison(8B)}
\end{figure}

\begin{figure}[!htbp]
	\centering
	\includegraphics [width=0.8\linewidth ]{./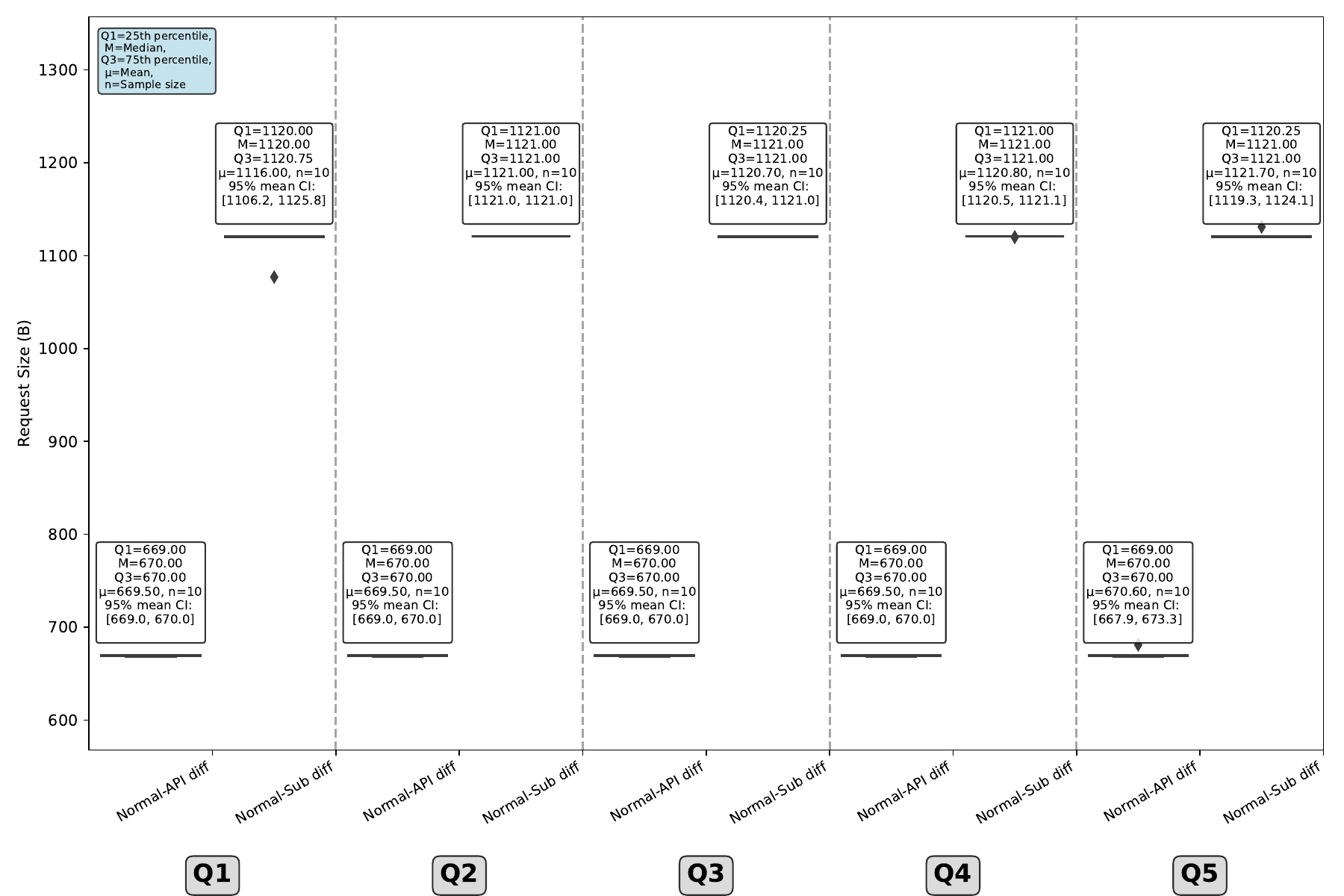}
	\caption{Q1 - Q5 communication request size differentials comparison(DeepSeek-R1-Distill-Llama-8B).}
	\label{Figure:Q1 - Q5 Request Size differentials Comparison(8B)}
\end{figure}

\begin{figure}[!htbp]
	\centering
	\includegraphics [width=0.8\linewidth ]{./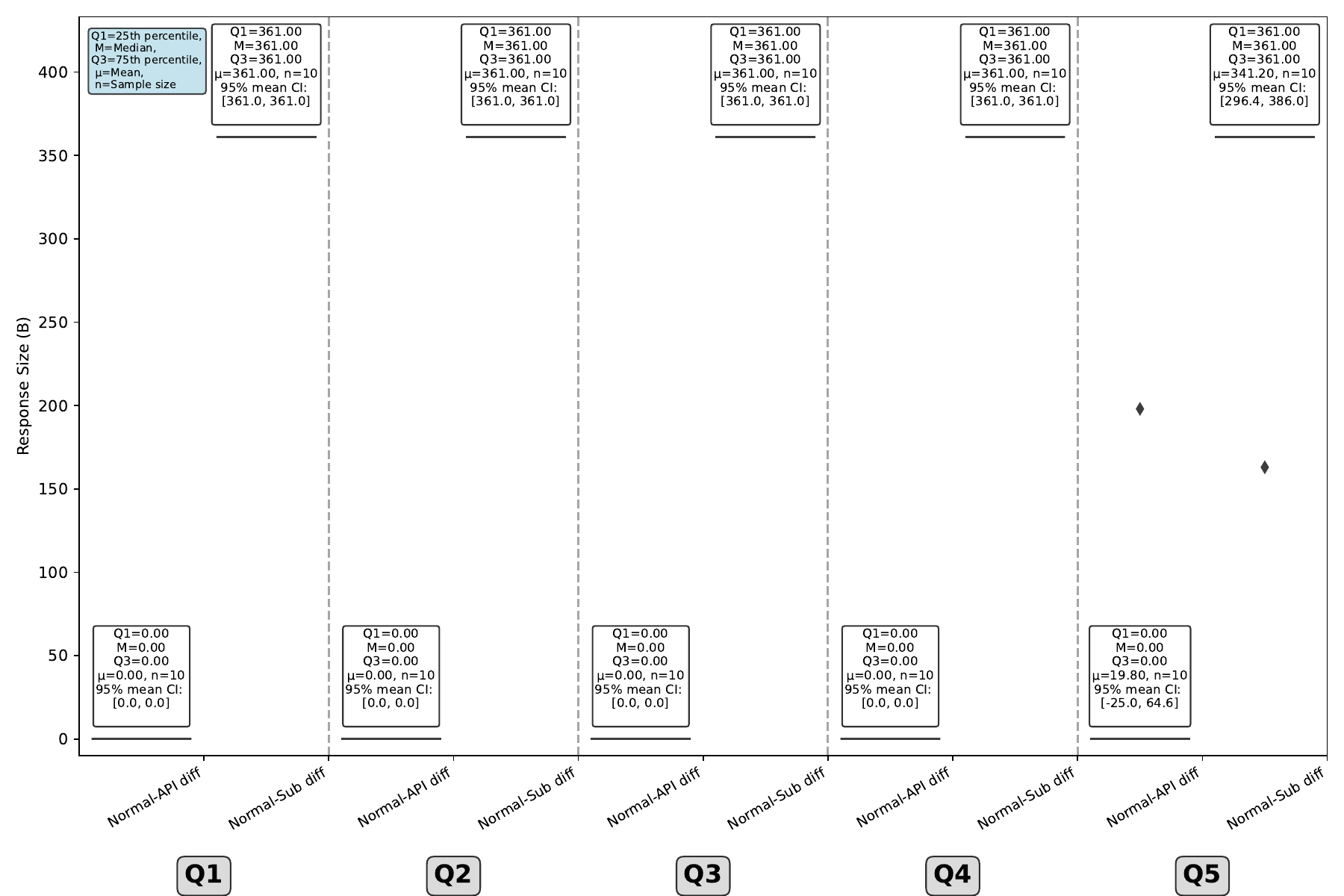}
	\caption{Q1 - Q5 communication response size differentials comparison(DeepSeek-R1-Distill-Llama-8B).}
	\label{Figure:Q1 - Q5 Response Size differentials Comparison(8B)}
\end{figure}

\newpage
 \section{Long Question}
 \label{app2}

LQ1:\{"text":"Please read the following text and answer the questions below.\texttt{\textbackslash n} <text> A leaked internal memo revealed that the company’s decision to reduce middle management layers was initially aimed at speeding up communication. However, the change unexpectedly resulted in the creation of new task forces and informal leaders, increasing both efficiency and confusion across departments. The CEO later described the process as “a self-correcting evolution rather than a planned reform.” </text>\texttt{\textbackslash n} What is the correct answer to this question: Which statement best summarizes the cause-and-effect chain described in the passage?\texttt{\textbackslash n} Choices:\texttt{\textbackslash n} (A) The restructuring created spontaneous leadership dynamics that altered communication patterns.\texttt{\textbackslash n} (B) The reorganization completely failed and caused total communication breakdown.\texttt{\textbackslash n} (C) The change was a preplanned top-down reform from the start.\texttt{\textbackslash n}(D) The CEO took no notice of any internal changes.\texttt{\textbackslash n} Let' s think step by step:"\}

LQ2:\{"text":"Please read the following text and answer the questions below. \texttt{\textbackslash n} <text> A regional agricultural survey from 2023 found that demand for fresh produce rose 15\% during harvest months but fell by almost the same margin in off-season periods. The authors emphasized that these figures showed correlation rather than causation, as consumer preferences were also influenced by festival schedules and marketing campaigns. </text>\texttt{\textbackslash n}What is the correct answer to this question: What conclusion does the author explicitly support about consumption and seasonality?\texttt{\textbackslash n}Choices:\texttt{\textbackslash n}(A) Consumption directly causes changes in supply.\texttt{\textbackslash n}(B) There is a correlation between consumption and seasonal supply, not a proven causal link.\texttt{\textbackslash n}(C) Seasonal factors have no effect on consumer behavior.\texttt{\textbackslash n}(D) Marketing alone determines all consumption patterns.\texttt{\textbackslash n}\texttt{\textbackslash n}Let’s think step by step:\texttt{\textbackslash n}"\}

LQ3:\{"text":"Please read the following text and answer the questions below.\texttt{\textbackslash n}<text> During the community board meeting, members debated a new parking policy. After two hours of discussion, a vote was called: 7 supported the motion, 6 opposed, and 2 abstained. The chairperson declared the motion approved but suggested revisiting it in three months to evaluate public feedback. </text>\texttt{\textbackslash n}What is the correct answer to this question: What was the factual outcome of the meeting?\texttt{\textbackslash n}Choices:\texttt{\textbackslash n}(A) The proposal was approved by a narrow vote margin.\texttt{\textbackslash n}(B) The proposal was unanimously rejected.\texttt{\textbackslash n}(C) No decision was reached and voting was postponed.\texttt{\textbackslash n}(D) The chairperson vetoed the result after the vote.\texttt{\textbackslash n}\texttt{\textbackslash n}Let’s think step by step:\texttt{\textbackslash n}"\}

LQ4:\{"text":"Please read the following text and answer the questions below.\texttt{\textbackslash n}<text> In this study, researchers combined online questionnaires, semi-structured interviews, and field observations to understand urban commuting habits. The participants were limited to residents of a single metropolitan district. The authors noted that the small sample size restricted generalization of results. </text>\texttt{\textbackslash n}What is the correct answer to this question: Which option correctly restates the methodological stance described?\texttt{\textbackslash n} Choices: \texttt{\textbackslash n} (A) A mixed-methods design was used within one urban cohort, acknowledging limited generalizability.\texttt{\textbackslash n}(B) The study relied only on laboratory experiments across multiple cities.\texttt{\textbackslash n}(C) The research was entirely theoretical with no empirical basis.\texttt{\textbackslash n}(D) The authors used meta-analysis of global datasets.\texttt{\textbackslash n}\texttt{\textbackslash n}Let’s think step by step:\texttt{\textbackslash n}"\}

LQ5:\{"text":"Please read the following text and answer the questions below.\texttt{\textbackslash n}<text> When discussing the proposed fuel tax, the minister acknowledged that short-term dissatisfaction was likely. However, he emphasized that preliminary polls suggested public sentiment would stabilize once citizens saw the environmental benefits. He warned against assuming an immediate political crisis. </text>\texttt{\textbackslash n}What is the correct answer to this question: What short-term public reaction did the speaker predict?\texttt{\textbackslash n}Choices:\texttt{\textbackslash n}(A) Initial discontent that would later calm as benefits became visible.\texttt{\textbackslash n}(B) Instant and irreversible political collapse.\texttt{\textbackslash n}(C) Total public indifference from the beginning.\texttt{\textbackslash n}(D) Violent protests lasting several years.\texttt{\textbackslash n}\texttt{\textbackslash n}Let’s think step by step:\texttt{\textbackslash n}"\}


%




\ifCLASSOPTIONcaptionsoff
  \newpage
\fi

\end{document}